\newcommand{\NP}{{\sf NP}}
\begin{document}
\title{Narrowing the Complexity Gap for\\ Colouring  $(C_s,P_t)$-Free Graphs\thanks{An extended abstract of this paper appeared in the proceedings of AAIM 2014~\cite{HJP14}.}}
\author{Shenwei Huang\inst{1}, Matthew Johnson\inst{2}\and Dani\"el Paulusma\inst{2}\thanks{Author supported by EPSRC (EP/G043434/1).}
}
\institute{School of Computing Science, Simon Fraser University\\ Burnaby B.C., V5A 1S6, Canada\\
\texttt{shenweih@sfu.ca}
\and
School of Engineering and  Computing Sciences, Durham University,\\
Science Laboratories, South Road,
Durham DH1 3LE, United Kingdom
\texttt{\{matthew.johnson2,daniel.paulusma\}@durham.ac.uk}}
\maketitle
\setcounter{footnote}{0}

\begin{abstract}
For a positive integer $k$ and graph $G=(V,E)$, a {\it $k$-colouring} of $G$ is  a mapping $c: V\rightarrow\{1,2,\ldots,k\}$ such that $c(u)\neq c(v)$ whenever $uv\in E$.
The {\sc $k$-Colouring} problem is to decide, for a given $G$, whether a $k$-colouring of $G$ exists.
The $k$-{\sc Precolouring Extension} problem is to decide, for a given $G=(V,E)$, whether a colouring of a subset of $V$ can be extended to a $k$-colouring of $G$.
A $k$-list assignment of a graph is an allocation of a list --- a subset of  $\{1,\ldots,k\}$ --- to each vertex, and the {\sc List $k$-Colouring} problem is to decide, for a given $G$, whether $G$ has a $k$-colouring in which each vertex is coloured with a colour from its list.
We continued the study of the computational complexity of these three decision problems when restricted to
graphs that do not contain a cycle on $s$ vertices or a path on $t$ vertices as induced subgraphs (for fixed positive integers $s$ and~$t$).
\end{abstract}

\section{Introduction}\label{sec:intro}

Let $G=(V,E)$ be a graph.
A {\it colouring} of $G$ is a mapping $c: V\rightarrow\{1,2,\ldots\}$ such that $c(u)\neq c(v)$ whenever
$uv\in E$. We call $c(u)$  the {\it colour} of $u$. A
{\it $k$-colouring} of $G$ is a colouring with $1\leq c(u)\leq k$ for all $u\in V$.
We study the following decision problem:

\medskip
\noindent \textsc{$k$-Colouring}\\
\mbox{}\rlap{\textit{Instance}\,: }\hphantom{\textit{Question}\,: }A graph $G$.\\
\textit{Question}\,: Is~$G$ $k$-colourable?

\medskip

It is well-known that {\sc $k$-Colouring} is \NP-complete even if $k=3$~\cite{Lo73}, and so the problem  has been studied for special graph classes; see the surveys of Randerath and Schiermeyer~\cite{RS04b} and Tuza~\cite{Tu97}, and the very recent survey of
Golovach, Johnson, Paulusma and Song~\cite{GJPS}.
In this paper, we consider graph classes defined in terms of forbidden induced subgraphs, and study the computational complexity of {\sc $k$-Colouring} and some related problems that we introduce now before stating our results.

A {\it $k$-precolouring} of $G=(V,E)$ is a mapping $c_W :W\rightarrow\{1,2,\ldots k\}$ for some subset $W\subseteq V$.
A $k$-colouring $c$ is an {\it extension} of $c_W$ if $c(v)=c_W(v)$ for each $v \in W$.

\medskip\noindent
\noindent \textsc{$k$-Precolouring Extension}\\
\mbox{}\rlap{\textit{Instance}\,: }\hphantom{\textit{Question}\,: }A graph $G$ and a $k$-precolouring $c_W$ of $G$.\\
\textit{Question}\,: Can $c_W$ be extended to a $k$-colouring of $G$?

\medskip

\noindent
A {\it list assignment} of a graph $G=(V,E)$ is a function $L$ that assigns
a list $L(u)$ of   {\it admissible} colours to each $u\in V$.
If $L(u)\subseteq \{1,\ldots,k\}$ for each $u\in V$, then $L$ is also called a \emph{$k$-list assignment}.
A colouring $c$
{\it respects} $L$ if  $c(u)\in L(u)$ for all $u\in V$.  Here is our next decision problem:

\medskip\noindent
\noindent \textsc{List $k$-Colouring}\\
\mbox{}\rlap{\textit{Instance}\,: }\hphantom{\textit{Question}\,: }A graph $G$ and a $k$-list assignment $L$ for $G$.\\
\textit{Question}\,: Is there a colouring of $G$ that respects $L$?

\medskip

\noindent Note that $k$-{\sc Colouring} can be viewed as a special case of $k$-{\sc Precolouring Extension} which is, in turn, a special case of {\sc List $k$-Colouring}.

A graph is $(C_s, P_t)$-free if it has no subgraph isomorphic to either $C_s$, the cycle on $s$ vertices, or $P_t$, the path on $t$ vertices.
Several papers~\cite{CMSZ14,GPS11,HH13} have considered the computational complexity of our three decision problems when restricted to
$(C_s,P_t)$-free graphs. In this paper, we continue this investigation. Our first contribution is to state the following theorem that provides
a complete summary of our current knowledge.
The cases marked with an asterisk are new results presented in this paper.   We use {\it p-time} to mean polynomial-time throughout the paper.

\begin{theorem}\label{t-coloringall}
Let $k,s,t$ be three positive integers.
The following statements hold for $(C_s,P_t)$-free graphs.
\begin{itemize}
\item[(i)] {\sc List $k$-Colouring} is \NP-complete if\\[-8pt]
\begin{enumerate}
\item[1.$^*$] $k\geq 4$,   $s=3$ and $t\geq 8$
\item[2.{$^*$}]  $k\geq 4$,  $s\geq 5$ and $t\geq 6$.\\[-8pt]
\end{enumerate}
\item[] {\sc List $k$-Colouring} is p-time solvable if\\[-8pt]
\begin{enumerate}
  \setcounter{enumi}{2}
\item[3.\phantom{$^*$}]   $k\leq 2$, $s\geq 3$ and $t\geq 1$
\item[4.\phantom{$^*$}]   $k=3$, $s=3$ and $t\leq 6$
\item[5.\phantom{$^*$}]   $k=3$, $s=4$ and $t\geq 1$
\item[6.\phantom{$^*$}]   $k=3$, $s\geq 5$ and $t\leq 6$
\item[7.\phantom{$^*$}]  $k\geq 4$, $s=3$ and $t\leq 6$
\item[8.\phantom{$^*$}]  $k\geq 4$, $s=4$ and $t\geq 1$
\item[9.\phantom{$^*$}]  $k\geq 4$, $s\geq 5$ and $t\leq 5$.\\[-25pt]
\item[] 
\end{enumerate}
\end{itemize}
\begin{itemize}
\item[(ii)] {\sc $k$-Precolouring Extension} is \NP-complete if\\[-8pt]
\begin{enumerate}
\item[1.\phantom{$^*$}] $k=4$,   $s=3$ and $t\geq 10$
\item[2.\phantom{$^*$}] $k=4$,  $s=5$ and $t\geq 7$
\item[3.\phantom{$^*$}] $k=4$, $s=6$ and $t\geq 7$
\item[4.{$^*$}]  $k=4$, $s=7$ and $t\geq 8$
\item[5.\phantom{$^*$}] $k=4$, $s\geq 8$ and $t\geq 7$
\item[6.\phantom{$^*$}] $k\geq 5$,   $s=3$ and $t\geq 10$
\item[7.{$^*$}]  $k\geq 5$,  $s\geq 5$ and $t\geq 6$.\\[-17pt]
\item[] 
\end{enumerate}
\item[] {\sc $k$-Precolouring Extension} is p-time solvable if\\[-8pt]
\begin{enumerate}
  \setcounter{enumi}{7}
\item[8.\phantom{$^*$}]  $k\leq 2$, $s\geq 3$ and $t\geq 1$
\item[9.\phantom{$^*$}]   $k=3$, $s=3$ and $t\leq 6$
\item[10.\phantom{$^*$}]   $k=3$, $s=4$ and $t\geq 1$
\item[11.\phantom{$^*$}]   $k=3$, $s\geq 5$ and $t\leq 6$
\item[12.\phantom{$^*$}]  $k\geq 4$, $s=3$ and $t\leq 6$
\item[13.\phantom{$^*$}]  $k\geq 4$, $s=4$ and $t\geq 1$
\item[14.\phantom{$^*$}]  $k\geq 4$, $s\geq 5$ and $t\leq 5$.\\[-25pt]
\item[] 
\end{enumerate}
\end{itemize}
\begin{itemize}
\item[(iii)] $k$-{\sc Colouring} is \NP-complete if\\[-8pt]
\begin{enumerate}
\item[1.{$^*$}]  $k=4$, $s=3$ and $t\geq 22$
\item[2.\phantom{$^*$}]  $k=4$,  $s=5$ and $t\geq 7$
\item[3.\phantom{$^*$}]  $k=4$, $s=6$ and $t\geq 7$
\item[4.\phantom{$^*$}]  $k=4$, $s=7$ and $t\geq 9$
\item[5.\phantom{$^*$}]  $k=4$, $s\geq 8$ and $t\geq 7$
\item[6.{$^*$}]  $k\geq 5$, $s=3$ and $t\geq t_k$ where $t_k$ is a constant that only depends on $k$
\item[7.\phantom{$^*$}]  $k\geq 5$,  $s=5$ and $t\geq 7$
\item[8.\phantom{$^*$}]  $k\geq 5$,  $s\geq 6$ and $t\geq 6$.\\[-17pt]
\item[] 
\end{enumerate}
\item[] $k$-{\sc Colouring} is p-time solvable if\\[-8pt]
\begin{enumerate}
  \setcounter{enumi}{7}
\item[9.\phantom{$^*$}]   $k\leq 2$, $s\geq 3$ and $t\geq 1$
\item[10.\phantom{$^*$}]   $k=3$, $s=3$ and $t\leq 7$
\item[11.\phantom{$^*$}]   $k=3$, $s=4$ and $t\geq 1$
\item[12.\phantom{$^*$}]   $k=3$, $s\geq 5$ and $t\leq 7$
\item[13.\phantom{$^*$}]  $k=4$, $s=3$ and $t\leq 6$
\item[14.\phantom{$^*$}]  $k=4$, $s=4$ and $t\geq 1$
\item[15.\phantom{$^*$}]  $k=4$, $s=5$ and $t\leq 6$
\item[16.\phantom{$^*$}]  $k=4$, $s\geq 6$ and $t\leq 5$
\item[17.\phantom{$^*$}]   $k\geq 5$, $s=3$ and $t\leq k+2$
\item[18.\phantom{$^*$}]   $k\geq 5$, $s=4$ and $t\geq 1$
\item[19.\phantom{$^*$}]   $k\geq 5$, $s\geq 5$ and $t\leq 5$.
\item[] 
\end{enumerate}
\end{itemize}
\end{theorem}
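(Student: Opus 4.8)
The plan is to treat Theorem~\ref{t-coloringall} as an organisational statement whose three parts are welded together by two elementary propagation rules, and then to concentrate effort on the small number of genuinely new constructions (the asterisked entries).

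First I would record the two propagation rules that structure the whole table. Since $k$-{\sc Colouring} is a restriction of $k$-{\sc Precolouring Extension}, which is itself a restriction of {\sc List $k$-Colouring}, \NP-hardness of a more restrictive problem on a graph class implies \NP-hardness of a more general problem on the same class, while a polynomial-time algorithm for a more general problem immediately yields one for each of its restrictions. In addition, every induced $P_{t'}$ with $t'\ge t$ contains an induced $P_t$, so the $(C_s,P_t)$-free graphs form a subclass of the $(C_s,P_{t'})$-free graphs; consequently each \NP-hardness entry, once established at its threshold value of $t$, holds for all larger $t$, and each polynomial-time entry, once established at its threshold value of $t$, holds for all smaller $t$. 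Thus it suffices to settle, for each fixed pair $(k,s)$, the single boundary value of $t$ for each of the three problems, and to note that the $k$-{\sc Colouring} hardness rows certify the corresponding rows for the two more general problems at the same threshold.

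Next I would dispose of the non-asterisked entries by appeal to known work. The polynomial rows for $k\le 2$ are immediate, since $2$-colourability is bipartiteness and its list version reduces to $2$-{\sc Satisfiability}; the $s=4$ rows rest on structural results for $C_4$-free graphs; the $k=3$ rows follow from known algorithms for $3$-colouring $P_t$-free graphs; and the already-established \NP-hardness rows are quoted from \cite{CMSZ14,GPS11,HH13}. This isolates the six starred statements, all of which are \NP-hardness results, as the real content to be supplied. Each is proved by a polynomial reduction from a convenient \NP-complete base problem (a suitably constrained satisfiability or colouring problem), in which variables and clauses are realised by gadgets whose internal cycles all avoid length $s$ and whose induced paths are uniformly short, so that the assembled instance is $(C_s,P_t)$-free for the claimed threshold. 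For the list and precolouring entries (i).1, (i).2, (ii).4 and (ii).7 the available lists or precoloured vertices let us hard-wire the logic cheaply, so the only care is to choose connector paths of the right length and parity; the delicate point when $s=3$ is that every gadget must be triangle-free, which rules out the usual colour-forcing cliques and forces longer connectors.

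The main obstacle is lifting these reductions from the list and precolouring settings to plain $k$-{\sc Colouring}, namely entries (iii).1 and (iii).6. Here one must simulate both the lists and the precoloured vertices using ordinary colouring gadgets: a fixed palette-identifying gadget pins down the roles of the $k$ colours, and each desired list is emulated by attaching vertices that are forced to carry exactly the excluded colours. Building such a forcing gadget while keeping it triangle-free (for $s=3$) is precisely what inflates the longest induced path of the construction, and a careful accounting of the worst induced path threading these gadgets is what produces the bound $t\ge 22$ for $k=4$ and the $k$-dependent constant $t_k$ for $k\ge 5$. I expect this path-length bookkeeping, carried out simultaneously with the verification that no induced $C_s$ is created anywhere in the combined instance, to be the hardest and most error-prone step of the proof.
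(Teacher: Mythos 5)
Your overall architecture---the two propagation/monotonicity rules, quoting known results for the unstarred entries, and supplying new hardness reductions for the six starred entries---is exactly how the paper organises its proof of Theorem~\ref{t-coloringall}, and your remark that (iii).1 and (iii).6 require simulating lists and precolourings by triangle-free colour-forcing gadgets matches the paper's strategy in spirit. But as a proof the proposal has a genuine gap: all six starred entries are left at the level of ``build gadgets whose cycles avoid length $s$ and whose induced paths are uniformly short'', and those constructions \emph{are} the technical content of the theorem. The paper's route is concrete: it re-analyses the \textsc{Not-All-Equal 3-Sat} gadget $J_I$ of~\cite{GPS12} to get (i).2 (Theorem~\ref{t-main}), subdivides the $a$--$x$ edges of $J_I$ to obtain the $P_8$-free chordal bipartite graph $J_I'$ for (i).1 (Theorem~\ref{t-hard4}), attaches precoloured pendant vertices to obtain the $P_{10}$-free graph $J_I^k$ for the precolouring entries, and, for the two hardest entries, glues $J_I^4$ to a modified Mycielski graph $M'$ (Theorem~\ref{t-new3}, giving (iii).1) or performs $F'$-identifications with an edge-minimal high-girth $(k+1)$-chromatic Erd\H{o}s graph (Theorem~\ref{t-final}, giving (iii).6). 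In particular, the constant $22$ is not the outcome of generic ``path-length bookkeeping'' one can gesture at: it comes out of Lemma~\ref{l-star1}, a long case analysis (over the number of $T$-vertices an induced path uses and how it threads the $a$-, $b$-, $c$-, $x$-type vertices, the pendant set $S$ and $M'$), supported by the auxiliary path Lemmas~\ref{l-mpropalt1a}--\ref{subpathJ4}. Nothing in your plan would reproduce that bound, or indeed any specific threshold in the starred rows.

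There are also concrete errors on the tractable side. You place \cite{CMSZ14} among the ``already-established \NP-hardness rows'', but it is a tractability result ($4$-\textsc{Colouring} of $(C_5,P_6)$-free graphs, giving (iii).15). Moreover your four categories ($k\le 2$ trivial, $s=4$ structural, $k=3$ algorithmic, known hardness) do not cover several polynomial entries: (i).7/(ii).12/(iii).13 ($k\ge 4$, $s=3$, $t\le 6$) need the bounded clique-width of $(C_3,P_6)$-free graphs~\cite{BKM06} combined with~\cite{KR03,Oum08}; entries such as (i).9, (ii).14, (iii).16 and (iii).19 need the $P_5$-free result of~\cite{HKLSS10}; and (iii).17 ($k\ge 5$, $s=3$, $t\le k+2$) needs the bound of~\cite{GHM03} that every $(K_3,P_{k+2})$-free graph is $k$-colourable. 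None of these follow from the ingredients you list, so even the ``routine'' half of your plan is incomplete as stated.
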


The new results on {\sc List $k$-Colouring}, {\sc $k$-Precolouring Extension} and {\sc $k$-Colouring} are in Sections~\ref{s-4col}, \ref{s-precol} and~\ref{s-colouring} respectively. 
In these three sections we often prove stronger statements, for example, on (chordal) bipartite graphs, to strengthen existing results in the literature as much as we can.
In Section~\ref{s-summary}, we prove Theorem~\ref{t-coloringall} by combining a number of previously known results with our new results, and
in Section~\ref{s-open} we summarize the open cases and pose a number of related open problems.

We introduce some more terminology that we will need.  Let $G=(V,E)$ be a graph.
The {\it chromatic number} of $G$ is the smallest integer $k$ for which $G$ has a $k$-colouring.
Let $\{H_1,\ldots,H_p\}$ be a set of graphs. We say that
$G$ is {\it $(H_1,\ldots,H_p)$-free} if $G$ has no induced subgraph isomorphic to a graph in $\{H_1,\ldots,H_p\}$; if $p=1$, we  write $H_1$-free instead of $(H_1)$-free.
The {\it complement} of $G$, denoted by $\overline{G}$, has vertex set $V$ and an edge between two distinct vertices
if and only if these vertices are not adjacent in $G$.
The disjoint union of two graphs $G$ and $H$ is denoted $G+H$, and
the disjoint union of $r$ copies of $G$ is denoted $rG$.
The {\it girth} of  $G$ is the number of vertices of a shortest cycle in~$G$ or infinite if~$G$ has no cycle. 
Note that a graph has girth at least $g$ for some $g\geq 4$ if and only if it is $(C_3,\ldots,C_{g-1})$-free.
To add a pendant vertex to a vertex $u \in V$, means to obtain a new graph from $G$ by adding one more vertex and making it adjacent only to $u$.
We denote the complete graph on $r$ vertices by $K_r$.
A graph is {\it chordal bipartite} if it is bipartite and every induced cycle has exactly four vertices.

We complete this section by providing some context for our work on $(C_s, P_t)$-free graphs.
We comment that it can be seen as a natural continuation of  investigations into the complexity of {\sc $k$-Colouring} and {\sc List $k$-Colouring} for $P_t$-free graphs
(see~\cite{GJPS}).
The sharpest results are the following.
Ho\`ang et al.~\cite{HKLSS10} proved that, for all $k\geq 1$, {\sc List $k$-Colouring} is p-time solvable on $P_5$-free graphs.
Huang~\cite{Hu13} proved
that $4$-{\sc Colouring} is \NP-complete for $P_7$-free graphs and that $5$-{\sc Colouring} is \NP-complete for $P_6$-free graphs.
Recently, Chudnovsky, Maceli and Zhong~\cite{CMZ14a,CMZ14b}
announced a p-time algorithm for solving {\sc 3-Colouring} on $P_7$-free graphs.
Broersma et al.~\cite{BFGP13} proved that {\sc List 3-Colouring} is p-time solvable for $P_6$-free graphs.
Golovach, Paulusma and Song~\cite{GPS12} proved that {\sc List 4-Colouring} is \NP-complete for $P_6$-free graphs.
These results lead to the following table (in which the open cases are denoted by ``?'').

\begin{table}[h]
\begin{center}
\resizebox{360pt}{35pt}{
\begin{tabular}{c|c|c|c|c||c|c|c|c||c|c|c|c}
& \multicolumn{4}{c||}{{\sc $k$-Colouring}} & \multicolumn{4}{c||}{ {\sc $k$-Precolouring Extension}}&\multicolumn{4}{c}{ {\sc List $k$-Colouring}}\\
\cline{2-13}
                                 & {\small $k=3$}             & {\small $k=4$}                  & {\small $k=5$}               & {\small $k\ge 6$}                          & {\small $k=3$}             & {\small $k=4$}                  & {\small $k=5$}               & {\small $k\ge 6$}                             & {\small $k=3$}             & {\small $k=4$}                  & {\small $k=5$}               & {\small $k\ge 6$} \\
\hline
{\small $t\leq 5$}    & {\small P} & {\small P}      & {\small P} & {\small P}  & {\small P} & {\small P}      & {\small P} & {\small P}
 & {\small P} & {\small P}      & {\small P} & {\small P} \\
{\small $t=6$}         & {\small P} & ?                     & {\small NP-c}               & {\small NP-c}                 & {\small P} & ?                     & {\small NP-c}            & {\small NP-c}
&{\small P} & {\small NP-c}                  & {\small NP-c}            & {\small NP-c} \\
{\small $t=7$}         &
{\small P}
& {\small NP-c}                     & {\small NP-c}               & {\small NP-c}  & ?               &  {\small NP-c}    & {\small NP-c}        & {\small NP-c}
& ?               & {\small NP-c}    & {\small NP-c}        & {\small NP-c}\\
{\small $t\geq 8$}   & ?              &  {\small NP-c}    &{\small NP-c} & {\small NP-c}  &? &  {\small NP-c}    &{\small NP-c} & {\small NP-c}
  &?              & {\small NP-c} & {\small NP-c} & {\small NP-c}
\end{tabular}
}
\end{center}
\caption{The complexity of $k$-{\sc Colouring}, $k$-{\sc Precolouring Extension}  and {\sc List $k$-Colouring}
for $P_t$-free graphs.}\label{t-table1}
\end{table}

 \section{New Results for List Colouring}\label{s-4col}

 In this section we give two results on {\sc List $4$-Colouring}.

We first prove that {\sc List $4$-Colouring}  is \NP-complete for the class of $(C_5,C_6,K_4,\overline{P_1+2P_2},\overline{P_1+P_4})$-free graphs.
(We observe that $\overline{P_1+2P_2}$ is also known as the 5-vertex wheel and $\overline{P_1+P_4}$ is sometimes called
the gem or the 5-vertex fan.)
This result strengthens an analogous result on {\sc List $4$-Colouring} of $P_6$-free graphs~\cite{GPS12}, and is obtained by a closer analysis of the hardness reduction used in the proof of that result.  The reduction is from the problem
{\sc Not-All-Equal 3-Sat} with positive literals only which was shown to be \NP-complete by Schaefer~\cite{Schaefer78} and is defined as follows. The input $I$ consists of a set  $X= \{x_1,x_2,\ldots,x_n\}$ of variables, and a set ${\cal C} = \{D_1, D_2, \ldots, D_m\}$ of 3-literal clauses over $X$ in which all literals are positive.
The question is whether there exists a truth assignment for $X$ such that each $D_i$ contains at least one true literal and at least one false literal.
We describe a graph~$J_I$ and 4-list assignment $L$ that are defined using the instance $I$:

\begin{itemize}
\item [$\bullet$]
$a$-type and $b$-type vertices:
for each clause $D_j$, $J_I$ contains two \emph{clause components} $D_j$ and $D_j'$ each isomorphic to $P_5$.  Considered along the paths the vertices in $D_j$ are  $a_{j,1},b_{j,1},a_{j,2},b_{j,2},a_{j,3}$ with
 lists of admissible colours $\{2,4\}, \{3,4\}, \{2,3,4\}, \{3,4\}, \{2,3\}$,  respectively, and the vertices in $D_j'$ are $a'_{j,1},b'_{j,1},a'_{j,2},b'_{j,2},a'_{j,3}$ with lists of admissible colours $\{1,4\}$, $\{3,4\}$, $\{1,3,4\}$, $\{3,4\}$, $\{1,3\}$, respectively.

\item [$\bullet$]
$x$-type vertices: for each variable $x_i$, $J_I$ contains a vertex $x_i$ with list of admissible colours $\{1,2\}$.

\item [$\bullet$]
For every clause $D_j$,
its variables $x_{i_1},x_{i_2},x_{i_3}$ are ordered in an arbitrary (but fixed) way, and in $J_I$ there are
edges $a_{j,h}x_{i_h}$ and $a_{j,h}'x_{i_h}$ for $h=1,2,3$.

\item [$\bullet$]
There is an edge in $J_I$ from every $x$-type vertex to every $b$-type vertex.
\end{itemize}

\noindent
\noindent
See Figure~\ref{fig1} for an example of the graph $J_I$.
In this figure, $D_j$ is a clause with ordered variables $x_{i_1},x_{i_2},x_{i_3}$. The thick edges indicate the
connection between these vertices and the $a$-type vertices of
 the two copies of the clause gadget. Indices from
the labels of the clause gadget vertices have been omitted to aid clarity.

\tikzstyle{vertex}=[circle,draw=black, fill=black, minimum size=5pt, inner sep=1pt]
\tikzstyle{edge} =[draw,-,black,>=triangle 90]

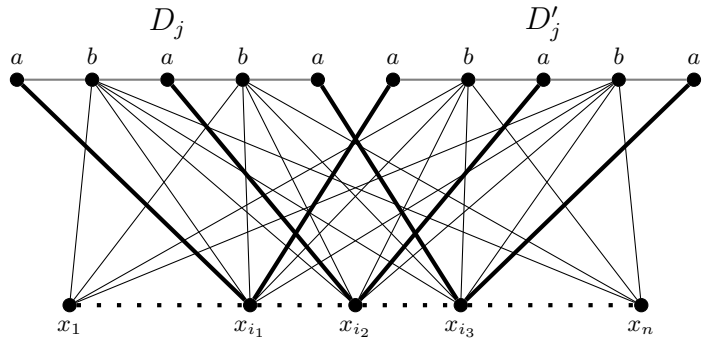
\begin{figure}
\begin{center}
\begin{tikzpicture}[scale=1]

   \foreach \pos/\name / \label in {{(0,1)/a1/a}, {(1,1)/b1/b}, {(2,1)/a2/a}, {(3,1)/b2/b}, {(4,1)/a3/a}, {(5,1)/a4/a}, {(6,1)/b3/b}, {(7,1)/a5/a}, {(8,1)/b4/b}, {(9,1)/a6/a}}
       { \node[vertex] (\name) at \pos {};
\node [above] at (\name.north) {$\label$};}

\foreach \source/ \dest  in {a2/b2,  a1/b1, b4/a5, a6/b4, a2/b1, b2/a3, a4/b3, b3/a5}
       \path[edge, black!50!white,  thick] (\source) --  (\dest);

\node at (2,1.75) {\large $D_j$};
\node at (7,1.75) {\large $D'_j$};

   \foreach \pos/\name / \label in {{(0.7,-2)/x1/x_1}, {(3.1,-2)/x2/x_{i_1}}, {(4.5,-2)/x3/x_{i_2}}, {(5.9,-2)/x4/x_{i_3}}, {(8.3,-2)/x5/x_n}}
       { \node[vertex] (\name) at \pos {};
\node [below] at (\name.south) {$\label$};}

\foreach \source/ \dest  in {b1/x1, b1/x2, b1/x3, b1/x4, b1/x5, b2/x1, b2/x2, b2/x3, b2/x4, b2/x5, b3/x1, b3/x2, b3/x3, b3/x4, b3/x5, b4/x1, b4/x2, b4/x3, b4/x4, b4/x5}
       \path[edge, black] (\source) --  (\dest);

\foreach \source/ \dest  in {x1/x2, x2/x3, x3/x4, x4/x5}
       \path[edge, black, line width = 1.6pt, dash pattern=on \pgflinewidth off 6pt] (\source) --  (\dest);

\foreach \source/ \dest  in {x2/a1, x2/a4, x3/a2, x3/a5, x4/a3, x4/a6}
       \path[edge, black, line width = 1.6pt] (\source) --  (\dest);

\end{tikzpicture}
\caption{An example of a graph $J_I$, as shown in~\cite{GPS12}. Only the clause $D_j=\{x_{i_1},x_{i_2},x_{i_3}\}$ is displayed.}\label{fig1}
\end{center}
\end{figure}

The following two lemmas are known.

\begin{lemma}[\cite{GPS12}]\label{l-truth}
The graph $J_I$ has a
colouring that respects $L$ if and only if $I$ has a satisfying truth
assignment in which each clause
contains at least one true and at least one false literal.
\end{lemma}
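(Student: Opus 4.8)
The plan is to read the colour of each $x$-type vertex as a truth value --- say colour $1$ encodes \emph{true} and colour $2$ encodes \emph{false} --- and to show that, once the $x$-type vertices are coloured, the rest of $J_I$ splits into the independent clause components, each colourable exactly when its clause is not monochromatic under the encoded assignment. The first thing I would record is that the complete join between the $b$-type vertices (all with lists $\subseteq\{3,4\}$) and the $x$-type vertices (all with lists $\subseteq\{1,2\}$) imposes no constraint on any colouring respecting $L$, since the two colour ranges are disjoint; these edges exist only to control induced subgraphs and play no role here. Consequently, after the $x$-type vertices are coloured, the surviving adjacencies inside each component $D_j$ (resp.\ $D_j'$) are just those of its own $P_5$ together with the edges $a_{j,h}x_{i_h}$ (resp.\ $a'_{j,h}x_{i_h}$); the effect of the latter is precisely to delete colour $2$ from $a_{j,h}$ when $x_{i_h}$ is coloured $2$ (and colour $1$ from $a'_{j,h}$ when $x_{i_h}$ is coloured $1$). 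Hence the components can be list-coloured independently, and a respecting colouring of $J_I$ exists if and only if there is a $\{1,2\}$-colouring of the $x$-type vertices --- an arbitrary truth assignment, since these vertices are pairwise non-adjacent --- that extends to every component.

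The heart of the argument is therefore a single gadget claim: the path $a_{j,1}b_{j,1}a_{j,2}b_{j,2}a_{j,3}$ with the stated lists, with colour $2$ removed from $a_{j,h}$ exactly when $x_{i_h}$ is coloured $2$, is list-colourable if and only if at least one of its three variables is coloured $1$ (true); symmetrically $D_j'$ is list-colourable if and only if at least one variable is coloured $2$ (false). For the hard direction I would run the forcing chain for $D_j$: if all three variables are false, then $a_{j,1}$ is forced to $4$, hence $b_{j,1}$ to $3$, hence $a_{j,2}$ to $4$, hence $b_{j,2}$ to $3$, and finally $a_{j,3}$, whose only admissible colour is then $3$, clashes with $b_{j,2}$, so no colouring exists. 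For the easy direction I would exhibit an explicit colouring in each case where some variable is true: e.g.\ $(a_{j,1},b_{j,1},a_{j,2},b_{j,2},a_{j,3})=(2,4,3,4,3)$ when $x_{i_1}$ is true, and analogous assignments when only $x_{i_3}$ or only $x_{i_2}$ is true. The claim for $D_j'$ follows by the symmetry interchanging colours $1$ and $2$.

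Combining the gadget claim with the component decomposition, a colouring of $J_I$ respecting $L$ exists if and only if the $x$-type vertices admit a truth assignment for which every clause $D_j$ has both a true and a false literal --- exactly the not-all-equal condition in the statement. The step I expect to be the main obstacle is pinning the gadget claim down precisely: one must check that the $3/4$-propagation along the $P_5$ collapses \emph{only} in the all-true and all-false cases, i.e.\ that making colour $2$ (resp.\ $1$) available at a single unrestricted $a$-vertex always breaks the forced chain. Once this local case analysis is verified, the global equivalence follows immediately from the independence of the components noted above.
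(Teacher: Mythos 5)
Your proof is correct: the disjointness of the colour ranges $\{1,2\}$ and $\{3,4\}$ does neutralise the $b$--$x$ edges, the clause components become independent once the $x$-type vertices are coloured, and your forcing chain (all variables coloured $2$ forces $4,3,4,3$ and then a clash at $a_{j,3}$) together with the explicit colourings $(2,4,3,4,3)$, $(4,3,2,4,3)$, $(4,3,4,3,2)$ establishes the gadget claim for $D_j$, with the $1\leftrightarrow 2$ mirror handling $D_j'$. Note that the paper itself contains no proof to compare against --- Lemma~\ref{l-truth} is imported from~\cite{GPS12} --- and your argument is exactly the standard reduction analysis one would find there, so there is nothing to flag.
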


\begin{lemma}[\cite{GPS12}]\label{l-p6}
The graph $J_I$ is $P_6$-free.
\end{lemma}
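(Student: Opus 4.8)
The plan is to exploit the fact that $J_I$ is tripartite with a very rigid edge pattern. Write $A$, $B$ and $X$ for the sets of all $a$-type, all $b$-type and all $x$-type vertices, respectively. First I would record the following structural facts, each immediate from the construction: each of $A$, $B$, $X$ is independent (inside a clause component $a_{j,1}b_{j,1}a_{j,2}b_{j,2}a_{j,3}$ the $a$-vertices sit in positions $1,3,5$ and the $b$-vertices in positions $2,4$; no edges join same-type vertices of different components, and no two $x$-type vertices are joined); $B$ is complete to $X$; and every $a$-type vertex has exactly one neighbour in $X$ and at most two neighbours in $B$, both in its own clause component. Two consequences I would then isolate are that $A\cup B$ is the disjoint union of the $2m$ clause components, each a $P_5$, and is therefore $P_6$-free; and that $A\cup X$ is a disjoint union of stars centred at $x$-type vertices, and is therefore $P_4$-free.

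Next I would suppose for contradiction that $J_I$ contains an induced path $P=v_1v_2\cdots v_6$, and let $p,q,r$ be the numbers of its vertices lying in $A$, $B$, $X$. Since each of $A,B,X$ is independent, no two vertices of the same type are consecutive on $P$, so $p,q,r\le 3$. Completeness of $B$ to $X$ restricts this further. If $P$ contained $b_1,b_2\in B$ and $x_1,x_2\in X$, then all four edges $b_ix_j$ would be present; as adjacency equals consecutiveness on an induced path, each $b_i$ would have $x_1,x_2$ as its two path-neighbours, producing the $4$-cycle $b_1x_1b_2x_2$ rather than a subpath of $P_6$ — impossible. Hence $P$ cannot meet both $B$ and $X$ in two or more vertices. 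Likewise three vertices of $B$ (or of $X$) would share a common neighbour in $X$ (resp.\ $B$) of degree at least $3$ on $P$, so $q\ge3$ forces $r=0$ and $r\ge3$ forces $q=0$. Combining these with $p+q+r=6$ and $p,q,r\le 3$ leaves only the triples $(p,q,r)\in\{(3,3,0),(3,0,3),(3,2,1),(3,1,2)\}$.

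Finally I would eliminate these four cases. If $(p,q,r)=(3,3,0)$ then $P\subseteq A\cup B$, contradicting that $A\cup B$ is $P_6$-free; if $(p,q,r)=(3,0,3)$ then $P\subseteq A\cup X$, contradicting that $A\cup X$ is $P_4$-free. In the two remaining cases the unique vertex of $X$ (resp.\ of $B$) is adjacent to both vertices of $B$ (resp.\ of $X$) on $P$, so these three vertices form a contiguous subpath occupying three consecutive positions of $P$, and the other three positions are then all held by $A$-vertices. A short check shows that deleting any three consecutive positions from $v_1\cdots v_6$ always leaves two consecutive positions, which would be two adjacent $A$-vertices, contradicting independence of $A$. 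Thus no induced $P_6$ exists. The only genuinely delicate point is the book-keeping that produces the exact list of surviving $(p,q,r)$-triples from the completeness condition; once the tripartite structure and the $B$-complete-to-$X$ relation are in hand, each surviving case collapses at once, so I expect no serious obstacle beyond that case analysis.
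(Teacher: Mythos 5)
Your proof is correct. Note first that the paper does not actually prove this lemma: it is stated as a known result imported from~\cite{GPS12} (``The following two lemmas are known''), so there is no in-paper argument to compare against; your proposal supplies a genuine, self-contained proof. Your argument is sound at every step: the three type classes $A$, $B$, $X$ are indeed independent, $B$ is complete to $X$, each $a$-type vertex has exactly one $X$-neighbour and at most two $B$-neighbours inside its own clause component, so $J_I[A\cup B]$ is a disjoint union of $P_5$'s and $J_I[A\cup X]$ a disjoint union of stars. The counting step is also right: independence gives $p,q,r\le 3$; completeness of $B$ to $X$ rules out $\min(q,r)\ge 2$ (the forced $4$-cycle) and rules out $q\ge 3, r\ge 1$ and $r\ge 3, q\ge 1$ (a path vertex of degree $\ge 3$), leaving exactly the four triples you list, each of which collapses as you describe. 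Your method is close in spirit to the arguments this paper \emph{does} include for the related graphs $J_I'$ and $J_I^4$ (Lemmas~\ref{l-bipartitej} and~\ref{subpathJ4}), which likewise bound how many vertices of each type an induced path can carry and how the path can be extended past them; your organisation via the exhaustive enumeration of $(p,q,r)$-triples is arguably cleaner and less ad hoc than those extension arguments, at the cost of relying on the specific fact that $B$ is complete to $X$, which is what makes the enumeration so short here.
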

 We are now ready to prove the main result of this section.

\begin{theorem}\label{t-main}
The {\sc List $4$-Colouring} problem is \NP-complete for the class of  $(C_5,C_6,K_4,\overline{P_1+2P_2},\overline{P_1+P_4},P_6)$-free graphs.
\end{theorem}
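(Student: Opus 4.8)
The plan is to reuse the reduction $I\mapsto(J_I,L)$ from \textsc{Not-All-Equal 3-Sat} verbatim. Membership in \NP{} is immediate, and Lemma~\ref{l-truth} already shows that this is a correct reduction, so everything comes down to checking that $J_I$ lies in the class, i.e.\ that $J_I$ is $(C_5,C_6,K_4,\overline{P_1+2P_2},\overline{P_1+P_4})$-free; that $J_I$ is $P_6$-free is Lemma~\ref{l-p6}. (As usual, and without affecting \NP-completeness, I would assume that each clause of $I$ consists of three distinct variables; this will be needed only for the gem.) The key preparatory step is to record the tripartite structure of $J_I$. Writing $A$, $B$, $X$ for the sets of $a$-type, $b$-type and $x$-type vertices, each of these three sets is independent, so $J_I$ is tripartite and hence trivially $K_4$-free. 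Moreover the edges split cleanly: the $A$--$B$ edges are exactly the edges of the clause components, so $A\cup B$ induces a disjoint union of copies of $P_5$; every $a$-vertex has a unique neighbour in $X$, so $A\cup X$ induces a disjoint union of stars centred in $X$; and $B$--$X$ is complete bipartite.

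For the two forbidden cycle lengths I would prove the stronger statement that $J_I$ has no induced cycle of length at least $5$. The crucial observation is a chord argument: if $C$ is an induced cycle with $|C|\geq 5$, then $C$ cannot contain two $b$-vertices and two $x$-vertices simultaneously, since any such four vertices induce a $K_{2,2}$ (four edges) by complete bipartiteness, whereas four vertices of an induced cycle of length at least $5$ span at most three edges. Hence $|C\cap B|\leq 1$ or $|C\cap X|\leq 1$. If $|C\cap B|\leq 1$, deleting at most one vertex leaves $C$ inside $A\cup X$, a forest of stars in which every path has at most three vertices, forcing $|C|\leq 4$. If $|C\cap X|\leq 1$, then either $C\subseteq A\cup B$, which is a forest and contains no cycle, or $C$ minus its unique $x$-vertex is a path on at least four vertices lying in a single $P_5$-component; such a subpath has a $b$-vertex in its interior, and that $b$-vertex is adjacent to the deleted $x$-vertex, producing a chord. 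Either way $|C|\geq 5$ is contradicted, so in particular $J_I$ is $(C_5,C_6)$-free.

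For $\overline{P_1+2P_2}$ (the five-vertex wheel) and $\overline{P_1+P_4}$ (the gem) I would argue through the neighbourhood of the putative hub $h$, the unique vertex adjacent to the other four. Since $a$-vertices have degree at most $3$, the hub must lie in $B$ or in $X$. If $h\in B$, the graph induced on $N(h)$ has at most two edges (its neighbourhood meets $A$ only in the two $a$-vertices of its own component), which is too few to contain an induced $C_4$ or an induced $P_4$. If $h=x_i\in X$, then $N(h)$ induces a bipartite graph between the occurrences of $x_i$ (in $A$) and all $b$-vertices, with edges given by the clause-component incidences; no two $a$-vertices share two common $b$-neighbours (within a component this fails by inspection of $P_5$, and across components there are no common $b$-neighbours at all), so $N(h)$ contains no induced $C_4$, giving $\overline{P_1+2P_2}$-freeness. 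For the gem I would use the distinctness of the clause variables: each component then contributes at most one occurrence of $x_i$, so the bipartite graph on $N(h)$ is a disjoint union of stars, in which the longest induced path has three vertices; hence $N(h)$ contains no induced $P_4$ and $J_I$ is $\overline{P_1+P_4}$-free.

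I expect the gem to be the main obstacle, for two reasons. First, it is the only case that genuinely uses the assumption that the three literals of a clause are distinct variables: if a variable were repeated in a clause, the two corresponding $a$-vertices together with the two $b$-vertices of that component would induce a $P_4$ dominated by the repeated variable's $x$-vertex, i.e.\ an induced gem, so this normalisation must be made explicit and justified. Second, it forces careful bookkeeping of exactly which $b$-vertices lie in the neighbourhood of a given $x$-vertex and of how the occurrences distribute across components; the wheel case is analogous but slightly easier, since a $C_4$ requires two shared neighbours rather than a connected path of length three.
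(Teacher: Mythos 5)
Your proposal is correct and takes essentially the same route as the paper: you reuse the reduction together with Lemmas~\ref{l-truth} and~\ref{l-p6}, and verify that $J_I$ avoids $C_5$, $C_6$, $K_4$, $\overline{P_1+2P_2}$ and $\overline{P_1+P_4}$ by a case analysis on the type classes $A$, $B$, $X$, just as the paper does --- your tripartite pigeonhole for $K_4$, the $K_{2,2}$-chord argument for long induced cycles, and the edge count in the hub's neighbourhood for the wheel and the gem are mild repackagings of the paper's case analysis. One worthwhile addition: you make explicit the normalisation that each clause consists of three distinct variables, which the paper's own gem argument (``$u$ is adjacent to two $a$-type vertices in the same clause-component'') also relies on but leaves implicit.
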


\begin{proof}
Lemma~\ref{l-truth} shows that the {\sc List 4-Colouring} problem is \NP-hard for the class of graphs $J_I$, where $I=(X,{\cal C})$ is an instance of  {\sc Not-All-Equal 3-Sat} with positive literals only, in which
every clause contains either two or three literals and in which each literal occurs in at most three different clauses.
Lemma~\ref{l-p6} shows that each $J_I$ is $P_6$-free.
As the {\sc List 4-Colouring} problem is readily seen to be in \NP, it remains to prove that  each $J_I$ contains no induced subgraph in the set $S=\{C_5,C_6,K_4,\overline{P_1+2P_2},\overline{P_1+P_4}\}$.
For contradiction, we assume that some $J_I$ has an induced subgraph~$H$ isomorphic to a graph in $S$.

First suppose that $H\in \{C_5,C_6\}$. The total number of $x$-type and $b$-type vertices can be at most $3$, as otherwise $H$ contains an induced $C_4$ or a vertex of degree at least~3.
Because $|V(H)|\geq 5$ and the subgraph of $H$ induced by its $b$-type and $x$-type vertices is connected, $H$ must contain at least two adjacent $a$-type vertices. This is not possible.

Now suppose that $H=K_4$. Because the $b$-type and $x$-type vertices induce a bipartite graph, $H$ must contain an $a$-type vertex.
Every $a$-type vertex has degree at most~3. If it has degree~3, then it has two non-adjacent neighbours (which are of $b$-type); a contradiction.

Finally suppose that $H\in \{\overline{P_1+2P_2},\overline{P_1+P_4}\}$. Let $u$ be the vertex that has degree~4 in~$H$. Then $u$ cannot be of $a$-type, because no $a$-type vertex has more than three neighbours in $J_I$.
If $u$ is of $b$-type, then every other vertex of~$H$ is either of $a$-type or of $x$-type, and because vertices of the same type are not adjacent, $H$ must contain two $a$-type vertices and two $x$-type vertices. Then an $a$-type vertex is adjacent to two
$x$-type vertices; a contradiction.
Thus $u$ must be of $x$-type, and so every other vertex of $H$ is either of $a$-type or of $b$-type. Because vertices of the same type are non-adjacent, $H$ must contain two $a$-type vertices and two $b$-type vertices. But then $u$ is adjacent to two $a$-type vertices in the same
 clause-component. This is not possible.
\qed
\end{proof}

Our second result modifies the same reduction.
From $J_I$ and $L$, we obtain a new graph $J_I'$ and list assignment $L'$ by subdividing every edge between an $a$-type vertex and an $x$-type vertex  and giving each new vertex  the list $\{1,2\}$.
We say that these new vertices are of $c$-type.

\begin{lemma}\label{l-bipartitej}
The graph $J_I'$ is $P_8$-free and chordal bipartite.
\end{lemma}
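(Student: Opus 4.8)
The plan is to prove the two properties separately, reusing the structure already established for $J_I$.

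For chordal bipartiteness, I would first exhibit a bipartition of $J_I'$. In $J_I$ the only edges are (i) within clause-components $D_j,D_j'$ (each a $P_5$ whose five vertices alternate between $a$-type and $b$-type), (ii) between $b$-type and $x$-type vertices, and (iii) between $a$-type and $x$-type vertices. Subdividing every edge of type (iii) with a fresh $c$-type vertex destroys the only edges joining $a$-type to $x$-type directly. I would then show that the sets $A=\{a\text{-type}\}\cup\{x\text{-type}\}$ and $B=\{b\text{-type}\}\cup\{c\text{-type}\}$ form a bipartition: every $P_5$ clause-component has its $a$-vertices in one class and its $b$-vertices in the other, every $b$--$x$ edge goes between $B$ and $A$, and every subdivided edge $a\!-\!c\!-\!x$ contributes the two edges $ac$ (between $A$ and $B$) and $cx$ (between $B$ and $A$). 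Hence $J_I'$ is bipartite.

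The heart of the argument is showing every induced cycle has length exactly $4$, and this is the step I expect to be the main obstacle, since one must rule out induced $6$-cycles and longer. Here I would lean on the dense $b$--$x$ biclique structure: every $x$-type vertex is adjacent to \emph{every} $b$-type vertex. In any cycle using two $x$-type vertices and two $b$-type vertices, all four cross-edges are present, forcing the induced subgraph to contain a $C_4$ and preventing a longer induced cycle from threading through. I would argue that a chordless cycle can contain at most one $b$-type and at most one $x$-type vertex without immediately creating a chord (two $b$'s together with any single $x$, or two $x$'s with any single $b$, already give a chord). The remaining vertices available to a chordless cycle are the low-degree $a$-type and $c$-type vertices lying in the subdivided paths $x\!-\!c\!-\!a$ and inside the $P_5$ components; since each $c$-type vertex has degree exactly $2$ (its only neighbours are one $a$ and one $x$) and each $a$-type vertex has degree at most $3$, the only way to close a short chordless cycle is via the pattern $x\!-\!c\!-\!a\!-\!?$ returning to a $b$ or $x$. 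A careful case check on which four vertices can form such a cycle — using that two $a$-vertices in the same component sharing an $x$-neighbour, together with the intervening structure, yield precisely a $4$-cycle $x\!-\!c\!-\!a\!-\!c'\!-\!x$-type configuration of length $4$ — should confirm that every induced cycle is a $C_4$.

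For $P_8$-freeness I would proceed by contradiction, assuming an induced $P_8$ and tracking how many vertices of each type it contains. By Lemma~\ref{l-p6}, $J_I$ is $P_6$-free, and subdivision can only lengthen induced paths through the new degree-$2$ $c$-vertices; so I would argue that any long induced path in $J_I'$ must repeatedly pass through $c$-type vertices, each of which forces the pattern $\ldots x\!-\!c\!-\!a\ldots$. Using again that each $x$-type vertex dominates all $b$-type vertices (so an induced path can contain at most two $x$-type vertices before a chord appears, and likewise limited $b$-type vertices), I would bound the total length: an induced path alternates in a controlled way among the four types, and the biclique between $x$- and $b$-types caps how many of these high-degree vertices can coexist chordlessly. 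Contracting each subdivided $x\!-\!c\!-\!a$ path back to a single edge would map an induced $P_8$ of $J_I'$ to an induced path in $J_I$ that is long enough to contradict $P_6$-freeness, giving the desired contradiction. The delicate point is verifying that the contraction preserves induced-ness (no spurious chords appear after undoing the subdivisions), which follows because the $c$-vertices have no neighbours other than their designated $a$ and $x$.
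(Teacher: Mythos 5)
Two key claims in your proposal are false or unsupported, and each one breaks the half of the proof it is meant to carry.

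First, in the chordal-bipartiteness part, your central claim that a chordless cycle contains at most one $b$-type and at most one $x$-type vertex is wrong. The parenthetical justification confuses forced edges with chords: if a cycle contains two $b$-type vertices and one $x$-type vertex, the two edges from the $x$ to the $b$'s only force the $x$ to be consecutive with both $b$'s on the cycle (a subpath $b\!-\!x\!-\!b'$), which creates no chord. Indeed $x - b_{j,1} - a_{j,2} - b_{j,2} - x$ is an induced $C_4$ of $J_I'$ containing two $b$-type vertices, a direct counterexample to your claim. The correct statement, which the paper's proof uses, is that the $b$- and $x$-type vertices of an induced cycle number at most \emph{three} and form a subpath; one then observes that in a hypothetical $C_6$ or $C_8$ the remaining $a$- and $c$-type vertices would have to induce a connected subgraph on at least three vertices, which is impossible because $a$-type (respectively $c$-type) vertices are pairwise non-adjacent and each $a$-type vertex has exactly one $c$-type neighbour. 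Your case analysis, built on the false count, never confronts configurations such as $b\!-\!x\!-\!b'$ threading a potential $C_6$, so the cycle analysis is incomplete even though the conclusion happens to be true.

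Second, the contraction argument for $P_8$-freeness fails at precisely the point you flag as delicate: undoing the subdivisions does \emph{not} preserve inducedness, and the reason has nothing to do with the $c$-vertices' neighbourhoods. The problematic pairs are $a$--$x$ pairs: $a_{j,h}$ and $x_{i_h}$ are non-adjacent in $J_I'$ (their edge was subdivided) but adjacent in $J_I$. Since any $b$-type neighbour of $a_{j,h}$ is also adjacent to $x_{i_h}$, the path $a_{j,h}\!-\!b\!-\!x_{i_h}$ is induced in $J_I'$, yet its image in $J_I$ is a triangle; so an induced path of $J_I'$ can map to a non-induced subgraph of $J_I$, and no contradiction with Lemma~\ref{l-p6} follows. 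Moreover, even if inducedness were preserved, you would need the hypothetical $P_8$ to contain at most two $c$-type vertices for the contracted path to have at least six vertices, and you never bound this number. Establishing such a bound requires essentially the direct type-counting that the paper performs instead of any contraction: on an induced path there are at most three vertices of type $b$ or $x$ in total, they form a subpath, and each end of that subpath can be extended only by one $a$-type vertex followed by one $c$-type vertex, giving the bound of seven vertices directly. I would recommend abandoning the contraction and carrying out that direct analysis.
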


\begin{proof}
We first prove that $J_I'$ is $P_8$-free (but not $P_7$-free).
Let $P$ be an induced path in $J_I'$.
If $P$ contains no $x$-type vertex, then $P$ contains vertices of at most one clause-component together with at most two $c$-type vertices. This means that $|V(P)|\leq 7$.
If $P$ contains no $b$-type vertex, then~$P$ can contain at most one $x$-type vertex (as any two $x$-type vertices can only be connected by a path that uses at least one $b$-type vertex). Consequently,~$P$ can have at most two $a$-type vertices and at most two $c$-type vertices. Hence, $|V(P)|\leq 5$ in this case.
From now on assume that~$P$ contains at least one $b$-type vertex and at least one $x$-type vertex. Also note that~$P$ can contain in total at most three vertices of $b$-type and $x$-type.

First suppose that $P$ contains exactly three vertices of $b$-type and $x$-type. Then these vertices form a 3-vertex subpath in $P$ of types $b,x,b$ or $x,b,x$. In both cases we can extend both ends of the subpath only by an $a$-type vertex and an adjacent $c$-type vertex, which means that $|V(P)|\leq 7$.
Now suppose that~$P$ contains exactly two vertices of $b$-type and $x$-type. Because these vertices are of different type, they are adjacent and we can extend both ends of the corresponding 2-vertex subpath of $P$ only by an $a$-type vertex and an adjacent $c$-type vertex.  This means that $|V(P)|\leq 6$.
We conclude that $J_I'$ is $P_8$-free.

We now prove that $J_I'$ is chordal bipartite.
The graph $J_I'$ is readily seen to be bipartite.
Because $J_I'$ is $P_8$-free, it contains no induced cycle on ten or more vertices.
Hence, in order to prove that $J_I'$ is chordal bipartite, it remains to show that $J_I'$ is $(C_6,C_8)$-free.

For contradiction, let $H$ be an induced subgraph of $J_I'$ that is isomorphic to $C_6$ or $C_8$.
Then $H$ must contain at least one vertex of $x$-type and at least one vertex of $b$-type.
The subgraph of $H$ induced by its vertices of $x$-type and $b$-type is connected and has size at most~3. This means
that the subgraph of $H$ induced by its vertices of $a$-type and $c$-type is also connected and has size at least~3.
This is not possible.
We conclude that $J_I'$ is chordal bipartite.\qed
\end{proof}

The following lemma can be proven by exactly the same arguments that were used to prove Lemma~\ref{l-truth}.

\begin{lemma}\label{l-truthbip}
The graph $J_I'$ has a
colouring that respects $L'$ if and only if $I$ has a satisfying truth
assignment in which each clause
contains at least one true and at least one false literal.
\end{lemma}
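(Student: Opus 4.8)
The plan is to establish a colour-preserving correspondence between the respecting colourings of $(J_I,L)$ and those of $(J_I',L')$, and then invoke Lemma~\ref{l-truth} to transfer the biconditional. The key observation is that subdividing an edge and giving the new vertex a two-element list is a nearly reversible operation as far as colourings are concerned: the subdivision vertex is a path on two edges whose endpoints carry prescribed lists, so it contributes a genuine constraint only when it forces those endpoints apart. First I would fix the structure of the modification precisely. Each edge $a_{j,h}x_{i_h}$ (and each $a'_{j,h}x_{i_h}$) of $J_I$ is replaced in $J_I'$ by a path $a_{j,h}\,c\,x_{i_h}$ where the new $c$-type vertex has list $\{1,2\}$. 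Every $a$-type vertex has a list of admissible colours contained in $\{1,2,3,4\}$ but, crucially, the $a$-type lists are $\{2,4\}$, $\{2,3,4\}$, $\{2,3\}$ for the unprimed components and $\{1,4\}$, $\{1,3,4\}$, $\{1,3\}$ for the primed ones, while the $x$-type vertices have list $\{1,2\}$.

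Next I would verify that the subdivision vertex never removes any colouring that was available before, nor creates a new one. Concretely, I would argue that for any colour $\alpha$ assigned to an $a$-type endpoint and any colour $\beta\in\{1,2\}$ assigned to the adjacent $x$-type vertex, there is a colour in the $c$-vertex's list $\{1,2\}$ distinct from both $\alpha$ and $\beta$ exactly when the original edge constraint $\alpha\neq\beta$ is satisfiable in $J_I$. The inspection is short: since the $x$-type vertex uses a colour from $\{1,2\}$, the $c$-vertex (also restricted to $\{1,2\}$) can always avoid that colour by taking the other element of $\{1,2\}$, provided the $a$-endpoint does not simultaneously block both elements of $\{1,2\}$ — but no single $a$-vertex takes two colours, so at most one of $\{1,2\}$ is blocked by the $a$-endpoint and at most one by the $x$-endpoint, leaving the $c$-vertex a legal choice precisely when the endpoints are not forced to collide. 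The main point to check is that in the original graph the edge $a_{j,h}x_{i_h}$ only ever enforces $a_{j,h}\neq x_{i_h}$, and that after subdivision this same inequality is enforced if and only if it was before; I would make this a one-line case check against the explicit lists.

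Having established that respecting colourings of $(J_I,L)$ and $(J_I',L')$ are in bijection (restrict to the original vertices in one direction; extend each $c$-vertex greedily in the other), I would conclude that $J_I'$ has a colouring respecting $L'$ if and only if $J_I$ has a colouring respecting $L$. Combining this equivalence with Lemma~\ref{l-truth}, which characterises exactly when $J_I$ admits such a colouring in terms of a not-all-equal satisfying assignment of $I$, yields the stated biconditional for $J_I'$ and completes the proof. The one step requiring genuine care, rather than bookkeeping, is the greedy extension direction: I must confirm that the greedy choice for each $c$-vertex is always available and never conflicts with the choices made for other $c$-vertices; this follows because the $c$-vertices form an independent set (each has exactly two neighbours, one $a$-type and one $x$-type, and no two $c$-vertices are adjacent), so their colours can be selected independently once the $a$-type and $x$-type colours are fixed. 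I expect this independence observation, together with the explicit list check, to be the crux, though both are routine; the real content is simply recognising that the arguments of Lemma~\ref{l-truth} carry over verbatim once the subdivision is seen to preserve the constraint structure.
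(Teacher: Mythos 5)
Your central claim is false, and it is the crux of your argument. You assert that, for a colour $\alpha$ on the $a$-type endpoint and $\beta\in\{1,2\}$ on the $x$-type endpoint, the $c$-vertex (with list $\{1,2\}$) has an admissible colour distinct from both exactly when $\alpha\neq\beta$. The opposite holds whenever $\alpha\in\{1,2\}$: the set $\{1,2\}\setminus\{\alpha,\beta\}$ is empty precisely when $\{\alpha,\beta\}=\{1,2\}$, i.e.\ precisely when $\alpha$ and $\beta$ are \emph{distinct} colours of $\{1,2\}$. Concretely, take an unprimed clause component, where $a_{j,1}$ has list $\{2,4\}$: the assignment $a_{j,1}=2$, $x_{i_1}=1$ respects $L$ and the edge of $J_I$, yet it does not extend to $J_I'$ (the $c$-vertex between them can be coloured neither $1$ nor $2$); conversely, $a_{j,1}=x_{i_1}=2$ violates the edge of $J_I$ but extends to $J_I'$ by colouring the $c$-vertex $1$. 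So the subdivision does not preserve the constraint structure --- it inverts it on $\{1,2\}$: the effective constraint in $J_I'$ is ``if $a_{j,h}$ takes its unique colour from $\{1,2\}$, then $x_{i_h}$ must take the \emph{same} colour.'' Hence there is no colour-preserving bijection between the respecting colourings of $(J_I,L)$ and of $(J_I',L')$: restriction to the original vertices can fail to be proper for $J_I$ in one direction, and your greedy extension fails in the other. Lemma~\ref{l-truth} cannot be invoked as a black box.

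The statement of Lemma~\ref{l-truthbip} is nevertheless true, but for a different reason, which is what the paper means by ``exactly the same arguments'': one re-runs the clause-gadget analysis of Lemma~\ref{l-truth} on $J_I'$ directly. In $J_I$, colour $2$ is available to $a_{j,h}$ if and only if $x_{i_h}$ is coloured $1$, and the unprimed $P_5$-gadget is list-colourable if and only if some $a_{j,h}$ has colour $2$ available (similarly for the primed gadget with colour $1$); in $J_I'$, colour $2$ is available to $a_{j,h}$ if and only if $x_{i_h}$ is coloured $2$. The gadget analysis is thus identical, only the truth-value correspondence on the $x$-vertices is interchanged (swap the roles of colours $1$ and $2$, equivalently complement the truth assignment). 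Since the not-all-equal condition is invariant under complementing a truth assignment, the biconditional with NAE-satisfiability of $I$ survives. Your proof needs to be repaired by carrying out this gadget-level argument --- or at least by observing the colour swap and the symmetry of the NAE condition --- rather than by transferring individual colourings between $J_I$ and $J_I'$.
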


Lemmas~\ref{l-bipartitej} and~\ref{l-truthbip} imply the last result of this section.

\begin{theorem}\label{t-hard4}
{\sc List $4$-Colouring} is \NP-complete for $P_8$-free chordal bipartite graphs.
\end{theorem}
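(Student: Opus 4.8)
The plan is to deduce Theorem~\ref{t-hard4} directly from the two lemmas that precede it, so almost all of the real work has already been front-loaded into Lemma~\ref{l-bipartitej} and Lemma~\ref{l-truthbip}. First I would recall that \textsc{Not-All-Equal 3-Sat} with positive literals only is \NP-complete by Schaefer~\cite{Schaefer78}, and that the construction $I \mapsto (J_I', L')$ is clearly computable in polynomial time (subdividing each $a$-$x$ edge once and assigning the list $\{1,2\}$ to each new $c$-type vertex adds only a linear number of vertices to $J_I$). Since $J_I'$ is built in polynomial time and the reduction preserves yes-instances, the standard membership observation that \textsc{List $4$-Colouring} lies in \NP\ (a colouring respecting $L'$ is a polynomial-size certificate checkable in polynomial time) completes the easy half.

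For correctness of the reduction I would invoke Lemma~\ref{l-truthbip}, which states that $J_I'$ has a colouring respecting $L'$ if and only if $I$ admits a not-all-equal satisfying truth assignment. This gives the polynomial-time many-one reduction from \textsc{Not-All-Equal 3-Sat} (positive literals) to \textsc{List $4$-Colouring}, establishing \NP-hardness of \textsc{List $4$-Colouring} on the class of graphs $\{J_I'\}$. To pin down the target graph class, I would then appeal to Lemma~\ref{l-bipartitej}, which guarantees that every $J_I'$ is simultaneously $P_8$-free and chordal bipartite. Hence the hardness already holds when the input is restricted to $P_8$-free chordal bipartite graphs, which is exactly the claimed statement.

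Assembling these pieces, the proof is essentially: \textsc{List $4$-Colouring} is in \NP; by Lemma~\ref{l-truthbip} the map $I \mapsto (J_I', L')$ is a correctness-preserving polynomial-time reduction from the \NP-complete problem \textsc{Not-All-Equal 3-Sat} with positive literals; and by Lemma~\ref{l-bipartitej} each constructed instance is $P_8$-free and chordal bipartite. Therefore \textsc{List $4$-Colouring} is \NP-complete on $P_8$-free chordal bipartite graphs.

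In truth there is no real obstacle left at the level of Theorem~\ref{t-hard4} itself, since the theorem is a one-line corollary of the two lemmas stated just above it; the genuine difficulty has been discharged in proving those lemmas. If I were asked to pinpoint where the care lies, it is entirely in the structural analysis behind Lemma~\ref{l-bipartitej}: verifying that subdividing the $a$-$x$ edges destroys all induced $P_8$'s while keeping the graph chordal bipartite (in particular $(C_6,C_8)$-free), via the counting argument that the $x$-type and $b$-type vertices of any long induced path or short induced even cycle span a connected piece of size at most three, which forces the complementary $a$-type and $c$-type vertices into a configuration that cannot occur. Given that lemma, I expect the theorem's proof to be short and purely a matter of correctly citing membership in \NP, the reduction source, and the two lemmas.
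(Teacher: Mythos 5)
Your proposal is correct and matches the paper's own treatment exactly: the paper derives Theorem~\ref{t-hard4} as an immediate consequence of Lemmas~\ref{l-bipartitej} and~\ref{l-truthbip}, just as you do. Your additional remarks on \NP{} membership and the polynomial-time computability of the reduction are routine details the paper leaves implicit, so there is no substantive difference.
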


\section{New Results for Precolouring Extension}\label{s-precol}

In this section we give three results on $k$-{\sc Precolouring Extension}.

Let $k\geq 4$.
Consider the bipartite graph $J_I'$ with its list assignment  $L'$ from Section~\ref{s-4col} that was defined immediately before Lemma~\ref{l-bipartitej}.
The list of admissible colours $L'(u)$ of each vertex $u$ is a subset of $\{1,2,3,4\}$. We add~$k-|L'(u)|$ pendant vertices to $u$ and precolour these vertices with different colours from $\{1,\ldots,k\}\setminus L'(u)$.
This results in a graph $J_I^k$ with
a $k$-precolouring $c_{W^k}$,
where~$W^k$ is the set of all the new pendant vertices in $J_I^k$.

\begin{lemma}\label{l-bipartitej2}
For all $k\geq 4$, the graph $J_I^k$ is $P_{10}$-free and
chordal bipartite.
\end{lemma}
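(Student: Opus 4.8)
The plan is to leverage Lemma~\ref{l-bipartitej}, which tells us that $J_I'$ is $P_8$-free and chordal bipartite, and to exploit the fact that $J_I^k$ differs from $J_I'$ only by the addition of pendant vertices. The key structural observation is that $J_I'$ is an \emph{induced} subgraph of $J_I^k$ (we add no edges among the vertices of $J_I'$), and that every new vertex has degree~$1$ in $J_I^k$. Everything will follow from how degree-$1$ vertices can sit inside induced paths and cycles.

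First I would dispose of bipartiteness: $J_I'$ is bipartite by Lemma~\ref{l-bipartitej}, and attaching a pendant vertex to some $u$ preserves bipartiteness by placing the new vertex on the side of the bipartition opposite to~$u$. Since the vertices of $W^k$ are mutually nonadjacent and each is joined only to a single vertex of $J_I'$, the graph $J_I^k$ is bipartite.

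For $P_{10}$-freeness I would argue as follows. Suppose, for contradiction, that $P$ is an induced path in $J_I^k$ with $|V(P)|\geq 10$. A pendant vertex of $J_I^k$ has degree~$1$, so it cannot be an interior vertex of $P$ (interior vertices have degree at least~$2$ in $P$); hence $P$ contains at most two pendant vertices, and these can occur only as its two endpoints. Deleting any such endpoints yields an induced path $P'$ lying entirely inside $J_I'$ with $|V(P')|\geq |V(P)|-2\geq 8$. Because $J_I'$ is an induced subgraph of $J_I^k$, $P'$ is induced in $J_I'$ as well, contradicting the $P_8$-freeness of $J_I'$. This matches the stated bound: a longest induced path of $J_I'$ has seven vertices, and appending one pendant vertex at each end gives at most nine.

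For chordal bipartiteness I would use the even stronger fact that pendant vertices cannot lie on \emph{any} cycle, since every vertex of a cycle has degree at least~$2$ within it. Thus any induced cycle of $J_I^k$ uses only vertices of $J_I'$, and, $J_I'$ being an induced subgraph, such a cycle is also an induced cycle of $J_I'$. As $J_I'$ is chordal bipartite, its only induced cycles have exactly four vertices, so the same holds for $J_I^k$; together with bipartiteness this gives that $J_I^k$ is chordal bipartite. I do not expect a genuine obstacle here: the whole argument rests on the two elementary facts that a degree-$1$ vertex can appear in an induced path only as an endpoint and in an induced cycle not at all. The only point requiring a little care is confirming that restricting an induced path or cycle of $J_I^k$ to the vertices of $J_I'$ leaves it induced, which is immediate because no edges were added within~$J_I'$.
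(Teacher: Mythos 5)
Your proof is correct and takes essentially the same approach as the paper: the paper's own proof is just the one-line observation that $J_I^k$ arises from the $P_8$-free chordal bipartite graph $J_I'$ of Lemma~\ref{l-bipartitej} by adding pendant vertices. You have simply supplied the details that the paper leaves implicit, namely that a degree-$1$ vertex can occur in an induced path only as an endpoint (so induced paths grow by at most two vertices) and cannot lie on any cycle at all.
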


\begin{proof}
Because $J_I'$ is $P_8$-free and chordal bipartite by Lemma~\ref{l-bipartitej}, and we only added pendant vertices,
$J_I^k$ is $P_{10}$-free and chordal bipartite.\qed
\end{proof}

The following lemma can be proven by the same arguments that were used to prove Lemmas~\ref{l-truth} and~\ref{l-truthbip}.

\begin{lemma}\label{l-truthbip2}
For all $k\geq 4$, the graph $J_I^k$ has a $k$-colouring that is an extension of $c_{W^k}$
if and only if $I$ has a satisfying truth
assignment in which each clause
contains at least one true and at least one false literal.
\end{lemma}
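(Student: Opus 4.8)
The plan is to show that the construction $(J_I^k, c_{W^k})$ provides a polynomial-time reduction from the list-colouring instance $(J_I', L')$, so that Lemma~\ref{l-truthbip2} follows from Lemma~\ref{l-truthbip} by exactly the same correspondence between colourings and truth assignments. The core idea is that precolouring pendant vertices simulates list restrictions: for each vertex $u$ of $J_I'$ we attached $k-|L'(u)|$ pendant vertices precoloured with the distinct colours of $\{1,\ldots,k\}\setminus L'(u)$, so in any extension of $c_{W^k}$ the vertex $u$ is forbidden from using exactly the colours outside $L'(u)$. Hence the admissible colours for $u$ in $J_I^k$ are precisely $L'(u)$, recovering the list assignment $L'$ on the original vertices.

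First I would establish the forward direction. Given an extension $c$ of $c_{W^k}$ to a $k$-colouring of $J_I^k$, I restrict $c$ to $V(J_I')$. For each $u\in V(J_I')$ and each pendant neighbour $w$ of $u$, the edge $uw$ forces $c(u)\neq c_{W^k}(w)$; as the pendant vertices at $u$ carry all colours of $\{1,\ldots,k\}\setminus L'(u)$, we get $c(u)\in L'(u)$. Thus $c|_{V(J_I')}$ is a colouring of $J_I'$ respecting $L'$, and by Lemma~\ref{l-truthbip} this yields a satisfying not-all-equal truth assignment for $I$.

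Conversely, starting from such a truth assignment, Lemma~\ref{l-truthbip} supplies a colouring $c'$ of $J_I'$ respecting $L'$. I would extend $c'$ to all of $J_I^k$ by setting each pendant vertex to its prescribed precolour. This extension is proper: the only new edges join a pendant vertex $w$ (coloured with some colour of $\{1,\ldots,k\}\setminus L'(u)$) to its unique neighbour $u$ (coloured from $L'(u)$), and the pendant vertices at a common $u$ receive distinct colours by construction, so no conflict arises. The resulting colouring is a $k$-colouring of $J_I^k$ that agrees with $c_{W^k}$ on $W^k$, i.e.\ an extension of $c_{W^k}$.

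Since both directions are essentially bookkeeping once the pendant-vertex-equals-list-restriction observation is in place, there is no serious obstacle; the statement indeed follows \emph{mutatis mutandis} from the proofs of Lemmas~\ref{l-truth} and~\ref{l-truthbip}. The only point requiring a little care is verifying that no two pendant vertices attached to the same $u$ are assigned the same precolour and that the precolours genuinely lie outside $L'(u)$, both of which hold because we chose $k-|L'(u)|$ \emph{different} colours from $\{1,\ldots,k\}\setminus L'(u)$; this is exactly what makes the equivalence tight rather than merely one-directional.
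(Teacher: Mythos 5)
Your proof is correct and is exactly the argument the paper intends: the paper's proof of Lemma~\ref{l-truthbip2} is a one-line appeal to ``the same arguments'' as Lemmas~\ref{l-truth} and~\ref{l-truthbip}, and your pendant-precolouring-simulates-lists equivalence (precoloured pendants at $u$ forbid precisely the colours outside $L'(u)$, in both directions) is the precise content of that appeal. Filling in the details via a black-box invocation of Lemma~\ref{l-truthbip} is the natural way to make this rigorous, and there is no gap.
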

Lemmas~\ref{l-bipartitej2} and~\ref{l-truthbip2} imply the first result of this section which extends a result of Kratochv\'{\i}l~\cite{Kr93} who showed that
5-{\sc Precolouring Extension} is \NP-complete for $P_{13}$-free bipartite graphs.

\begin{theorem}\label{t-prebipartite}
For all $k\geq 4$,  $k$-{\sc Precolouring Extension}  is \NP-complete for the class of $P_{10}$-free chordal bipartite graphs.
\end{theorem}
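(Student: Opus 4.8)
The plan is to read the result directly off the two preceding lemmas, which together already package the construction $(J_I^k, c_{W^k})$ into a polynomial-time reduction with the right output class. First I would observe that $k$-{\sc Precolouring Extension} lies in \NP: given a candidate mapping $c\colon V\to\{1,\dots,k\}$, one verifies in polynomial time that $c$ is a proper colouring and that $c(v)=c_{W^k}(v)$ for every $v\in W^k$. Hence it suffices to exhibit a polynomial-time many-one reduction from an \NP-hard problem.

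For the hardness I would reduce from {\sc Not-All-Equal 3-Sat} with positive literals only, shown \NP-complete by Schaefer~\cite{Schaefer78}. Given an instance $I=(X,\mathcal{C})$, the map $I\mapsto (J_I^k, c_{W^k})$ is the reduction. I would first confirm that this map is computable in polynomial time: $J_I'$ arises from $J_I$ by subdividing a bounded number of edges per clause, and $J_I^k$ arises by attaching at most $k-|L'(u)|\le k$ pendant vertices to each vertex and precolouring them; thus $|V(J_I^k)|$ is linear in $n+m$ and $c_{W^k}$ is assigned in linear time. Correctness is then exactly Lemma~\ref{l-truthbip2}: $(J_I^k, c_{W^k})$ is a yes-instance of $k$-{\sc Precolouring Extension} if and only if $I$ has a satisfying assignment in which each clause contains a true and a false literal.

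It remains only to certify that every output instance lies in the target class, which is precisely Lemma~\ref{l-bipartitej2}, guaranteeing that $J_I^k$ is $P_{10}$-free and chordal bipartite for all $k\ge 4$. Combining membership in \NP, the polynomial-time reduction, its correctness, and this structural guarantee yields \NP-completeness on $P_{10}$-free chordal bipartite graphs. Since the genuine work has been discharged into Lemmas~\ref{l-bipartitej2} and~\ref{l-truthbip2}, there is no real obstacle at this stage; the two points worth keeping in mind are that the pendant-vertex gadget leaves the forbidden-subgraph analysis intact (only pendant vertices were added, which cannot create new induced long paths or non-chordless cycles), and that each pendant vertex is precoloured with a colour from $\{1,\dots,k\}\setminus L'(u)$, so that an extension of $c_{W^k}$ forces the colour of $u$ to lie in $L'(u)$ and the $k$-colouring extension faithfully simulates the list constraints of $L'$.
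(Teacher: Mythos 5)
Your proposal is correct and follows exactly the paper's route: the paper likewise derives the theorem directly by combining Lemma~\ref{l-bipartitej2} (structure: $P_{10}$-free and chordal bipartite) with Lemma~\ref{l-truthbip2} (correctness of the reduction from {\sc Not-All-Equal 3-Sat} with positive literals), with your additional remarks on \NP{}-membership and polynomial-time computability being the routine details the paper leaves implicit.
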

Here is our second result.

\begin{theorem}\label{t-new1}
The {\sc $4$-Precolouring Extension} problem is \NP-complete for the class of $(C_5,C_6,C_7,C_8,P_8)$-free graphs.
\end{theorem}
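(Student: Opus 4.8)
The plan is to reuse the very same source of hardness as Theorem~\ref{t-main}, namely the family of graphs $J_I$ with $4$-list assignment $L$ arising from an instance $I$ of {\sc Not-All-Equal 3-Sat} with positive literals only, and to convert the list constraints into precolouring constraints by the pendant-vertex gadget already employed to build the graphs $J_I^k$ before Lemma~\ref{l-bipartitej2}. The crucial design choice is to apply that gadget to $J_I$ \emph{itself} rather than to its edge-subdivided variant $J_I'$. Since $J_I$ is $P_6$-free by Lemma~\ref{l-p6}, attaching pendants keeps induced paths short enough to remain $P_8$-free, whereas starting from the longer graph $J_I'$ only yields $P_{10}$-freeness, as in Theorem~\ref{t-prebipartite}.

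Concretely, I would form a graph $G_I$ with $4$-precolouring $c$ as follows: for every vertex $u$ of $J_I$ with list $L(u)\subseteq\{1,2,3,4\}$, add $4-|L(u)|$ new pendant vertices adjacent only to $u$ and precolour them with pairwise distinct colours from $\{1,2,3,4\}\setminus L(u)$; let $W$ denote the set of these pendants. Correctness then follows by exactly the argument used for Lemmas~\ref{l-truth} and~\ref{l-truthbip2}: in any $4$-colouring of $G_I$ extending $c$, each original vertex $u$ is barred by its pendants from every colour of $\{1,2,3,4\}\setminus L(u)$ and so must receive a colour of $L(u)$, while conversely any $L$-respecting colouring of $J_I$ extends to the precoloured pendants. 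Hence $G_I$ has a $4$-colouring extending $c$ if and only if $J_I$ has a colouring respecting $L$, which by Lemma~\ref{l-truth} holds if and only if $I$ is a yes-instance. As {\sc $4$-Precolouring Extension} is clearly in \NP, it remains only to verify the structural properties of $G_I$.

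For the cycle conditions I would use that $J_I$ is $(C_5,C_6)$-free by Theorem~\ref{t-main} and $P_6$-free by Lemma~\ref{l-p6}; since each of $C_7$ and $C_8$ contains an induced $P_6$, the graph $J_I$ is in fact $(C_5,C_6,C_7,C_8)$-free. Pendant vertices have degree one and therefore lie on no cycle, so $G_I$ inherits $(C_5,C_6,C_7,C_8)$-freeness from $J_I$. The main point, and the one I expect to require the most care, is $P_8$-freeness. Since $J_I$ is $P_6$-free, every induced path of $J_I$ has at most five vertices. In $G_I$ a pendant, having degree one, can occur in an induced path only as an endpoint, so at most two pendants appear on any induced path, one at each end, and the remaining vertices form a contiguous induced path of $J_I$. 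This yields at most $5+2=7$ vertices, hence no induced $P_8$. Combining the two structural claims with the correctness equivalence establishes that {\sc $4$-Precolouring Extension} is \NP-complete for $(C_5,C_6,C_7,C_8,P_8)$-free graphs; the delicate bookkeeping is confined to the path count, where the degree-one property of pendants is what caps the extension of the $J_I$-path at one vertex per end.
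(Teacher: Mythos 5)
Your proof is correct and proves exactly the stated theorem, but your gadget differs from the paper's own construction in a noteworthy way. The paper attaches precoloured pendants only to the $a$-type vertices of $J_I$; the lists of the $b$-type and $x$-type vertices are enforced instead by two precoloured vertices $y_1,y_2$ joined to \emph{all} $b$-type vertices and two precoloured vertices $c_1,c_2$ joined to \emph{all} $x$-type vertices, and the structural claim is then justified by viewing $y_1,y_2$ as extra $x$-type vertices and $c_1,c_2$ as extra $b$-type vertices, so that the core of the graph stays within the $(C_5,C_6,P_6)$-free family of Theorem~\ref{t-main} before the (harmless) pendants are accounted for. You instead put pendants on \emph{every} vertex of $J_I$ and verify the structure directly from two elementary facts: a degree-one vertex lies on no cycle, so every induced cycle of your $G_I$ is an induced cycle of $J_I$ (which, being $(C_5,C_6)$-free and $P_6$-free, has no induced $C_\ell$ for any $\ell\geq 5$); and a degree-one vertex can only be an endpoint of an induced path, so $P_6$-freeness of $J_I$ caps induced paths of $G_I$ at $5+2=7$ vertices. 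Both routes are sound, and yours is arguably cleaner: it reuses only the pendant analysis that the paper itself applies to $J_I^k$ in Lemma~\ref{l-bipartitej2}, and it avoids the paper's slightly informal re-typing step (note, for instance, that $c_1,c_2$ are not adjacent to $y_1,y_2$, so the augmented graph is not literally of the form $J_{I'}$ for another instance $I'$). What the paper's variant buys is that the unpendanted core remains inside an already-analysed graph family, so no fresh path or cycle bookkeeping is needed there; what yours buys is uniformity and a shorter, self-contained verification.
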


\begin{proof}
Let $J_I$ be the instance with list assignment $L$ as constructed at the start of Section~\ref{s-4col}. Instead of considering lists, we introduce new vertices, which we precolour (we do not precolour any of the original vertices).
For each clause component $D_j$ we add five new vertices, $s_j$, $t_j$, $u_{j,1}$, $u_{j,2}$, $u_{j,3}$. We add edges $a_{j,1}s_j$, $a_{j,3}t_j$ and $a_{j,h}u_{j,h}$ for $h=1,\ldots 3$. We precolour $s_j$, $t_j$, $u_{j,1}$, $u_{j,2}$, $u_{j,3}$ with colours $3$, $4$, $1$, $1$, $1$, respectively.
For each clause component $D_j'$ we add five new vertices, $s_j'$, $t_j'$, $u_{j,1}'$, $u_{j,2}'$, $u_{j,3}'$. We add edges $a_{j,1}'s_j'$, $a_{j,3}'t_j'$ and $a_{j,h}'u_{j,h}'$ for $h=1,\ldots 3$. We precolour $s_j$, $t_j$, $u_{j,1}$, $u_{j,2}$, $u_{j,3}$ with colours $3$, $4$, $2$, $2$, $2$, respectively.
Finally, we add two new vertices $c_1,c_2$, which we make adjacent to all $x$-type vertices, and two new vertices $y_1,y_2$, which we make adjacent to all
$b$-type vertices. We colour $c_1$, $c_2$, $y_1$, $y_2$ with colours~$3$, $4$, $1$, $2$, respectively.
This results in a new graph $J_I^*$.
Because $y_1,y_2$ can be viewed as $x$-type vertices and $c_1,c_2$ as $b$-type vertices, because every other new vertex is a pendant vertex and because $J_I$ is $(C_5,C_6,P_6)$-free (by Theorem~\ref{t-main}),
we find that $J_I^*$ is $(C_5,C_6,C_7,C_8,P_8)$-free.
Moreover, our precolouring forces the list $L(v)$ upon every vertex $v$ of $J_I$.
Hence, $J_I^*$ has a 4-colouring extending this precolouring if and only if $J_I$ has a colouring that respects $L$.
By Lemma~\ref{l-truth}  the latter is true if and only if $I$ has a satisfying truth
assignment in which each clause contains at least one true and at least one false literal.\qed
\end{proof}

Broersma et al.~\cite{BFGP13} showed that  $5$-{\sc Precolouring Extension} for $P_6$-free graphs is \NP-complete. It can be shown that the gadget constructed in their
\NP-hardness reduction is $(C_5,C_6)$-free.
By adding $k-5$ dominating vertices, precoloured with colours $6,\ldots,k$, to each vertex in their gadget, we can extend their result from $k=5$ to $k\geq 5$.
This leads to the following theorem.

\begin{theorem}\label{t-new2}
For all $k\geq 5$, {\sc $k$-Precolouring Extension} is \NP-complete for $(C_5,C_6,P_6)$-free graphs.
\end{theorem}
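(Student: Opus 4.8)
The plan is to reduce from $5$-\textsc{Precolouring Extension} on $P_6$-free graphs, which Broersma et al.~\cite{BFGP13} proved to be \NP-complete via a specific gadget. The first task is to establish that the graphs produced by their reduction are not merely $P_6$-free but in fact $(C_5,C_6,P_6)$-free. This can be verified by direct inspection of their gadget, checking that no five vertices induce a $C_5$ and no six vertices induce a $C_6$. Together with their reduction this already yields \NP-completeness of $5$-\textsc{Precolouring Extension} for $(C_5,C_6,P_6)$-free graphs, settling the case $k=5$.

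To handle $k>5$, I would take an arbitrary instance $(H,c_W)$ of the $5$-problem with $H$ being $(C_5,C_6,P_6)$-free, and build an instance $(H',c_{W'})$ of the $k$-problem as follows. Introduce $k-5$ new vertices $d_1,\dots,d_{k-5}$, make each of them adjacent to every other vertex of $H'$ (so they are dominating and form a clique among themselves), and precolour $d_i$ with colour $5+i$; set $c_{W'}=c_W\cup\{d_i\mapsto 5+i\}$. Since every original vertex is now adjacent to vertices precoloured $6,\dots,k$, in any extension it must receive a colour from $\{1,\dots,5\}$; conversely every $5$-colouring of $H$ extending $c_W$ uses only colours in $\{1,\dots,5\}$ and therefore combines with the precolouring of the $d_i$ to give a $k$-colouring of $H'$ extending $c_{W'}$. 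Hence $(H',c_{W'})$ is a yes-instance exactly when $(H,c_W)$ is, the construction is polynomial, and membership in \NP\ is immediate.

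It remains to check that $H'$ is $(C_5,C_6,P_6)$-free. The key observation is that a dominating vertex cannot lie on any induced $C_5$, $C_6$, or $P_6$: each vertex of such an induced subgraph $F$ has at most two neighbours inside $F$, whereas any $d_i$ is adjacent to every other vertex and hence to at least three vertices of $F$ once $|V(F)|\geq 4$. Since all three forbidden graphs have at least five vertices, no induced copy of $C_5$, $C_6$, or $P_6$ in $H'$ can contain a $d_i$, so any such copy would already lie in $H$, contradicting the $(C_5,C_6,P_6)$-freeness of $H$. This confirms that $H'$ belongs to the required class.

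The main obstacle is the first step: verifying that the original gadget of Broersma et al.\ avoids induced $C_5$ and $C_6$, which requires a careful look at their construction rather than a generic argument. The extension from $k=5$ to general $k$ via dominating vertices is routine, since the degree argument above makes the preservation of all three forbidden subgraphs essentially automatic.
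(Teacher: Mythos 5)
Your proposal matches the paper's proof essentially step for step: the paper likewise invokes Broersma et al.'s result that $5$-{\sc Precolouring Extension} is \NP-complete for $P_6$-free graphs, asserts (without detailed verification, just as you do) that their gadget is $(C_5,C_6)$-free, and lifts to all $k\geq 5$ by adding $k-5$ dominating vertices precoloured with colours $6,\ldots,k$. Your explicit degree argument showing that a dominating vertex cannot lie on an induced $C_5$, $C_6$ or $P_6$, and your explicit equivalence of yes-instances, merely spell out details the paper treats as routine.
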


\section{New Results for Colouring} \label{s-colouring}

The first result that we show in this section is that $4$-{\sc Colouring}
is \NP-complete for  $(C_3,P_{22})$-free graphs.
We will prove this by modifying the graph $J_I^4$ from Section~\ref{s-precol}.   It improves
the result of Golovach~et al.~\cite{GPS11} who showed that 4-{\sc Colouring} is \NP-complete for $(C_3,P_{164})$-free graphs.

First we review a well-known piece of graph theory.  The {\em Mycielski construction} of a graph $G$ is created by adding, for each vertex $v$ of $G$, a new vertex that is adjacent to each vertex $N_G(v)$, and then adding a further vertex that is adjacent to each of the other new vertices.  For example, the Mycielski construction of $K_2$ is a 5-cycle, and the Mycielski construction of a 5-cycle is the well-known Gr\"{o}tzsch graph.  These examples are the first in an infinite sequence of graphs $M_2,M_3,\ldots$ where $M_2=K_2$ and $M_i$, $i \geq 3$, is the Mycielski construction of $M_{i-1}$.   The graph~$M_5$ is displayed in Figure~\ref{fig2}.
As we will make considerable use of this graph, let us explain its construction carefully.  Let us suppose that we start with $M_3$ where $V(M_3)=\{1,2,3,4,5\}$ and $E(M_3)=\{\{1,2\},\{2,3\},\{3,5\},\{5,4\},\{4,1\}\}$
(note that, for clarity, we denote an edge between two vertices $u$ and $v$ by $\{u,v\}$ instead of $uv$).
Then $M_4$ is made by adding each vertex $i$, $6 \leq i \leq 10$, and making it adjacent to the neighbours of vertex $i-5$ and to a further vertex~11.  Finally~$M_5$ is obtained by adding a vertex $i$, $12 \leq i \leq 22$, with the same neighbours as vertex $i-11$ and a further last new vertex, 23.  Mycielski~\cite{My55} showed that each $M_k$ is $C_3$-free and has chromatic number $k$.  Moreover, any proper subgraph of~$M_k$ is $(k-1)$-colourable (see for example \cite{BM}).

\tikzstyle{vertex}=[circle,draw=black, fill=black, minimum size=5pt, inner sep=1pt]
\tikzstyle{edge} =[draw,-,black,>=triangle 90]

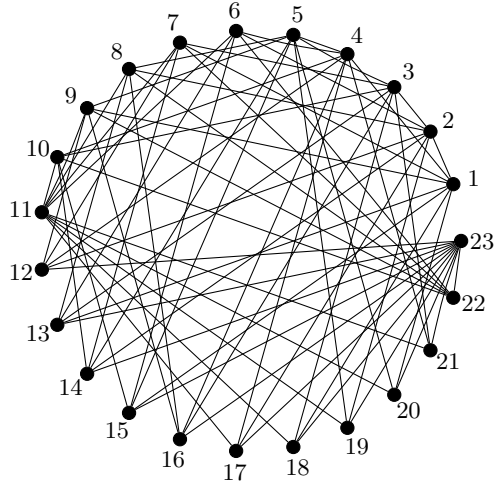
\begin{figure}
\begin{center}
\begin{tikzpicture}[xscale=0.7,yscale=0.7]

\foreach \s  in {1, ..., 23}
{  \node[vertex, fill=black] (\s) at (360/23 * \s:4) {};
\node at (360/23 * \s:4.4 ) {\s};
}

\foreach \source/ \dest  in {5/15,1/2, 3/2, 4/1, 5/4,  6/2, 7/1, 7/3, 8/2, 8/5, 9/1, 9/5, 10/4, 10/3, 11/6, 11/7, 11/8, 11/10,
12/2, 3/13, 4/12, 5/14, 6/13, 6/15, 7/12, 7/14, 8/13, 8/16, 9/12, 9/16, 10/15, 10/14, 11/17, 11/18, 11/19, 11/20, 11/21,
1/13, 14/2, 15/1, 16/4, 16/3, 17/2, 17/4, 18/1, 18/3, 19/2, 19/5, 20/1, 20/5, 21/4, 21/3, 22/6, 22/7, 22/8, 22/9, 22/10,
23/12, 23/13, 23/14, 23/15, 23/16, 23/17, 23/18, 23/19, 23/20, 23/22}
       \path[edge, black] (\source) --  (\dest);

\draw[edge, black] (3) .. controls (360/23 * 4:3.8 ) .. (5);
\draw[edge, black] (21) .. controls (360/23 * 22:3.8 ) .. (23);
\draw[edge, black] (4) .. controls (360/23 * 5:3.8 ) .. (6);
\draw[edge, black] (9) .. controls (360/23 * 10:3.8 ) .. (11);

\end{tikzpicture}
\caption{The graph $M_5$: vertex sets $\{1,2\}$, $\{1,2,3,4,5\}$ and $\{1,2,3,4,5,6,7,8,9,10,11\}$ induce $M_2$, $M_3$ and $M_4$,
respectively.}\label{fig2}
\end{center}
\end{figure}

Let $M'$ be the graph obtained from $M_5$ by removing the edge from 17 to 23.  We need the following lemma.
\begin{lemma}
Let $\phi$ be a $4$-colouring of $M'$.  Then $\phi(17) = \phi(23)$, and moreover,  $\{\phi(2), \phi(4), \phi(11), \phi(17)\} = \{\phi(2), \phi(4), \phi(11), \phi(23)\} = \{1, 2, 3, 4\}$.
\end{lemma}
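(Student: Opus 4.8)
The plan is to exploit two facts recalled in the text: $M_5$ has chromatic number $5$, whereas every proper subgraph of $M_5$ is $4$-colourable, and the base graph $M_4$ on $\{1,\dots,11\}$ itself has chromatic number $4$.

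First I would dispose of the identity $\phi(17)=\phi(23)$. Adding the edge $\{17,23\}$ back to $M'$ recreates $M_5$, which has no $4$-colouring since $\chi(M_5)=5$. Hence a $4$-colouring $\phi$ of $M'$ cannot assign distinct colours to $17$ and $23$, for otherwise $\phi$ would already be a proper $4$-colouring of $M_5$. Write $c:=\phi(17)=\phi(23)$. From the construction, $17$ is the shadow of vertex $6$, so its neighbours in $M_5$ are $N_{M_4}(6)\cup\{23\}=\{2,4,11,23\}$; deleting the edge to $23$ leaves $N_{M'}(17)=\{2,4,11\}$, whence $\phi(2),\phi(4),\phi(11)\ne c$. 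Since $\phi(17)=\phi(23)$, the two displayed sets coincide, so it remains to show $\{\phi(2),\phi(4),\phi(11),\phi(17)\}=\{1,2,3,4\}$, i.e. that $\phi(2),\phi(4),\phi(11)$ are pairwise distinct.

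For this I would run a Mycielski-folding argument relative to $M_4$. The apex $23$ has colour $c$ and is adjacent in $M'$ to every shadow $12,\dots,22$ except $17$, so colour $c$ appears among the shadows only at $17$. Define $\psi$ on $M_4$ by $\psi(i)=\phi(i)$ when $\phi(i)\ne c$ and $\psi(i)=\phi(i+11)$, the colour of the shadow of $i$, when $\phi(i)=c$. The standard verification — using that the shadow of $i$ is adjacent to $N_{M_4}(i)$ and that none of the edges involved is the deleted one — shows $\psi$ is a proper colouring of $M_4$. The crucial bookkeeping is that $\psi(i)=c$ can occur only when both $\phi(i)=c$ and $\phi(i+11)=c$, and the unique shadow coloured $c$ is $17=6+11$; thus $\psi$ uses $c$ at most at vertex $6$.

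The heart of the proof, which I expect to be the main obstacle, is converting this into distinctness. If $\phi(6)\ne c$ then $\psi$ avoids $c$ entirely, giving a proper $3$-colouring of $M_4$ and contradicting $\chi(M_4)=4$; hence $\phi(6)=c$ and $\psi$ is a $4$-colouring of $M_4$ using $c$ at vertex $6$ alone. Restricting $\psi$ to $M_4-6$ yields a proper $3$-colouring with colours $\{1,2,3,4\}\setminus\{c\}$, and the neighbours of $6$ in $M_4$ are exactly $2,4,11$. Were $\psi(2),\psi(4),\psi(11)$ not all distinct, they would miss some colour of $\{1,2,3,4\}\setminus\{c\}$, and recolouring $6$ with that colour would extend the $3$-colouring to all of $M_4$ — again contradicting $\chi(M_4)=4$. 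Therefore $\psi(2),\psi(4),\psi(11)$, which equal $\phi(2),\phi(4),\phi(11)$, are pairwise distinct, and together with $\phi(17)=\phi(23)=c$ this gives both claimed equalities. The delicate points are precisely the localisation of the failure of $4$-colourability at vertex $6$ through the special shadow $17$, and checking that $\psi$ stays proper in the presence of the deleted edge.
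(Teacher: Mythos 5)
Your proof is correct, and its first half (deducing $\phi(17)=\phi(23)$ by observing that otherwise $\phi$ would be a $4$-colouring of $M_5$) is exactly the paper's argument. For the set equality, however, you take a genuinely different and substantially longer route. The paper never leaves $M_5$: since $N_{M'}(17)=\{2,4,11\}$, if $\{\phi(2),\phi(4),\phi(11),\phi(17)\}\neq\{1,2,3,4\}$ then some colour $d\neq\phi(17)$ is missing from $\{\phi(2),\phi(4),\phi(11)\}$, and recolouring vertex $17$ with $d$ gives a proper colouring of $M'$ in which $17$ and $23$ now disagree, i.e.\ a $4$-colouring of $M_5$, a contradiction; this is a two-line argument reusing only the fact $\chi(M_5)=5$ already invoked in part one. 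You instead descend to $M_4$ via the Mycielski folding: you check that the folded map $\psi$ stays proper despite the deleted edge $\{17,23\}$ (which indeed never interferes, as it joins a shadow to the apex rather than a shadow to a vertex of $M_4$), you localise the only possible occurrence of the colour $c$ at vertex $6$, and you then invoke $\chi(M_4)=4$ twice, once to force $\phi(6)=c$ and once to force $\psi(2),\psi(4),\psi(11)$ to be pairwise distinct, finally transferring back to $\phi$ via $\psi(i)=\phi(i)$ on $\{2,4,11\}$. All of this bookkeeping is sound, so your argument goes through. What your approach buys is a structural explanation of the gadget through the Mycielski recursion itself, and it makes explicit that the same scheme works for $M_k$ minus any shadow--apex edge; what it costs is length and the auxiliary fact $\chi(M_4)=4$, whereas the paper's recolouring trick needs nothing beyond what the first claim already used.
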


\begin{proof}
If $\phi(17)\neq \phi(23)$, then $\phi$ is a 4-colouring of $M_5$, which is not possible.
 As $N_{M'}(17) = \{2, 4, 11\}$, we must have $\{\phi(2), \phi(4), \phi(11), \phi(17)\} = \{1, 2, 3, 4\}$ else there is a 4-colouring $\phi'$ which disagrees with $\phi$ only on 17 and so, again, is a 4-colouring of $M_5$.  Thus we also have $\{\phi(2), \phi(4), \phi(11), \phi(23)\} = \{1, 2, 3, 4\}$.\qed
\end{proof}

Let $T=\{t_1,t_2,t_3,t_4\}=\{2,4,11,23\}$.  We present three lemmas about induced paths in $M'$ with endvertices in $T$.
Proving each of these three lemmas is straightforward but tedious. We therefore omitted their 
proofs.

\begin{lemma}  \label{l-mpropalt1a}
Every induced path in $M'$ with both endvertices in $T$ contains at most $7$ vertices.
\end{lemma}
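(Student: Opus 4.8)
The plan is to read off the full adjacency structure of $M'$ from its Mycielskian construction and then bound the path length by a short case analysis on the two endpoints. First I would fix the layered partition $V(M')=A\cup B\cup\{11\}\cup D\cup\{23\}$, where $A=\{1,2,3,4,5\}$ is the original $5$-cycle, $B=\{6,\dots,10\}$ are its shadows in $M_4$, $11$ is the apex of $M_4$, $D=\{12,\dots,22\}$ are the shadows created in $M_5$ (so that $17\in D$), and $23$ is the apex of $M_5$. The (tedious) step that must be recorded once and for all is the list of between-layer edges; the facts I actually need are: $A$ induces a $C_5$; $B$ and $D$ are independent sets; $11$ is adjacent to every vertex of $B$ and to exactly the vertices $\{17,18,19,20,21\}\subseteq D$; $23$ is adjacent to every vertex of $D$ except $17$; the only ``missing'' edges among the apices are $11\nsim 23$ and $17\nsim 23$; and every vertex of $B$ and every vertex of $D\setminus\{22\}$ has exactly two neighbours in $A$ (while $22$ has none, and $N(17)=\{2,4,11\}$).

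From these facts I would extract three capacity bounds for an arbitrary induced path $P$ in $M'$. First, since $A$ induces a $C_5$ and every induced subgraph of a path is a linear forest, $|A\cap V(P)|\le 4$. Second, if $23\in V(P)$ then every vertex of $(D\setminus\{17\})\cap V(P)$ is adjacent to $23$ and hence, by inducedness, must be one of the neighbours of $23$ along $P$; as $P$ is a path, $23$ has at most two neighbours on it, so $|D\cap V(P)|\le 3$, and $\le 2$ when $23$ is an endpoint. Third, symmetrically, if $11\in V(P)$ then every vertex of $\bigl(B\cup\{17,\dots,21\}\bigr)\cap V(P)$ is a neighbour of $11$ along $P$, so at most two of them lie on $P$, and at most one when $11$ is an endpoint. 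Since both endpoints of $P$ lie in $T=\{2,4,11,23\}$, I would then split into the six endpoint pairs and in each combine these bounds with the explicit small neighbourhoods (in particular $N(17)=\{2,4,11\}$, and the ``two forbidden $A$-neighbours'' carried by each $B$- or $D$-vertex) to force $|V(P)|\le 7$.

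The main obstacle is that the three capacities on their own only give $|V(P)|\le 9$, so there is no slack to waste: the bound $7$ is tight, witnessed by the induced path on the vertices $11,17,2,3,5,15,23$ (in this order) joining $11$ to $23$. Consequently the real work is the interaction between the up-to-four $A$-vertices and the $D$- (or $B$-) vertices: whenever the path enters $D$ or $B$, the two $A$-neighbours of that vertex become forbidden as non-consecutive companions, while the near-domination of $D$ by $23$ and of $B$ by $11$ forbids almost all of the remaining vertices of those layers. I expect the pairs having an apex, $11$ or $23$, as an endpoint to require the most careful chord-checking, since they must thread through a shadow layer to reach the apex, whereas the pair $\{2,4\}$, which stays essentially inside $M_4$, is the easiest. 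In every case the resulting configuration is finite and the verification elementary; the only cost is the bookkeeping, which is exactly why a full write-up is omitted.
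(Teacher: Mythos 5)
You are proposing a proof for a lemma whose proof the paper itself omits (the authors declare it ``straightforward but tedious''), so your attempt can only be judged on its own terms, and there it has a genuine gap: the list of facts you declare sufficient omits every adjacency between the two shadow layers $B$ and $D$. In $M'$ these edges exist and matter: $12\sim\{7,9\}$, $13\sim\{6,8\}$, $14\sim\{7,10\}$, $15\sim\{6,10\}$, $16\sim\{8,9\}$, and $22$ is adjacent to all of $B$. Your second and third capacity bounds are conditional on $23$, respectively $11$, lying on $P$, so for the endpoint pair $\{2,4\}$ the path may avoid both apices entirely, and then nothing in your fact list controls $B\cap V(P)$ or $D\cap V(P)$ at all (in particular, your claim that the capacities alone give $|V(P)|\le 9$ fails in this case, as does the appeal to ``near-domination'' of $D$ by $23$ and of $B$ by $11$). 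The lemma is in fact \emph{not} a consequence of your stated facts: keep all $A$-neighbourhoods as in $M'$, but declare the $B$--$D$ edges to be exactly $19-9$, $9-13$, $13-7$, $7-20$, $20-10$; every fact on your list still holds, yet $2-19-9-13-7-20-10-4$ is an induced path on $8$ vertices with both endvertices in $T$. So the pair $\{2,4\}$, which you dismiss as ``the easiest'' because it ``stays essentially inside $M_4$'', is precisely the case your facts cannot handle --- and it is not confined to $M_4$: in the real $M'$ the induced $P_7$ given by $2-14-7-22-9-16-4$ joins $2$ to $4$ while avoiding both $11$ and $23$ and uses four $B$--$D$ edges.

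The repair is concrete but unavoidable: record the full $B$--$D$ adjacency and use it in the case analysis. For instance, for endpoints $2$ and $4$ with $11,23\notin V(P)$, every internal vertex other than the two neighbours of the endpoints must avoid $N(2)\cup N(4)$, leaving only $\{7,9,13,15,18,20,22\}$, which in $M'$ induces the path $7-22-9$ plus isolated vertices; hence at most three such vertices fit on $P$ and $|V(P)|\le 7$. Arguments of this kind (and the analogous ones when one or both apices are present, where your capacity bounds do help) constitute the actual proof; since your write-up defers all of this verification and rests it on an insufficient set of facts, what you have is a plausible outline --- with a correct and useful tight example $11-17-2-3-5-15-23$ --- rather than a proof.
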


\begin{lemma}  \label{l-mpropalt1b}
Every induced path in $M'$ with an endvertex in $T$ contains at most~$8$ vertices.
\end{lemma}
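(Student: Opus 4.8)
The plan is to bound the length of any induced path $P$ in $M'$ that has one endvertex in $T=\{2,4,11,23\}$, showing $|V(P)|\leq 8$. The natural strategy is to exploit the highly structured, layered nature of the Mycielski construction: $M_5$ is built in three stages from $M_3$ (the $5$-cycle on $\{1,2,3,4,5\}$), the ``middle'' copies $\{6,\ldots,11\}$, and the ``outer'' copies $\{12,\ldots,23\}$, where vertices $11$ and $23$ are the two apex vertices. First I would record the adjacency structure precisely: each outer vertex $i$ (for $12\leq i\leq 22$) is adjacent to the same $M_3$-neighbours as $i-11$ together with the apex $23$, so the outer vertices form an independent set joined to $23$ and to a few core vertices, and similarly the middle vertices $\{6,\ldots,10\}$ are independent and joined to $11$ and to core vertices.

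The key observation to exploit is that the two apex vertices $11$ and $23$ each dominate large independent sets, so an induced path cannot pass through many vertices of those independent sets: once $P$ uses an apex vertex, it can contain at most two of that apex's neighbours (the two path-neighbours of the apex), and if $P$ merely passes near the apex the independence of the dominated set sharply limits how many such vertices can appear consecutively without creating a chord. So the core of the argument is a counting/case analysis organised by how many vertices $P$ takes from each of the three layers (core $\{1,\ldots,5\}$, middle $\{6,\ldots,10\}$, outer $\{12,\ldots,22\}$) and whether it uses the apexes $11,23$.

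The cleanest route is probably to build on Lemma~\ref{l-mpropalt1a} rather than redo everything: that lemma already caps at $7$ any induced path whose \emph{both} endvertices lie in $T$. For a path $P$ with only one endvertex $p_1\in T$, let $p_\ell$ be the other endvertex. If $p_\ell$ happened also to lie in $T$ we would be done with a bound of $7\leq 8$, so I would assume $p_\ell\notin T$ and argue that I can extend $P$ by at most one vertex to reach a vertex of $T$, or else directly show the free endvertex $p_\ell$ forces $P$ to be short. Concretely, since $T$ contains both apexes $11,23$ and the two core vertices $2,4$ that sit inside the independent neighbourhoods dominated by the apexes, most vertices of $M'$ are at distance one or two from $T$; I would show that whatever the free endpoint $p_\ell$ is, appending the edge or short subpath that connects $p_\ell$'s neighbourhood to $T$ either lands us in the two-endvertices-in-$T$ case (giving $\leq 7$, hence the extended path has $\leq 8$ and so $P$ has $\leq 8$) or creates a chord, contradicting inducedness.

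The main obstacle I anticipate is the genuinely finite but fiddly case analysis: the absence of the single edge $\{17,23\}$ in $M'$ (versus $M_5$) creates exactly the induced paths that make the bound tight, so I cannot argue purely by independence and domination — I must track that one missing edge carefully, since it is precisely the feature that lets a path reach length close to the bound. In other words, the degenerate behaviour all concentrates around vertices $17$ and $23$, and verifying that even the longest such path stops at $8$ vertices will require explicitly exhibiting an extremal path and checking no vertex can be inserted. Because the authors themselves note these verifications are ``straightforward but tedious,'' I would present the layered adjacency description and the apex-domination principle as the conceptual skeleton, then relegate the exhaustive check over the constant-size graph $M'$ to a routine (and in this paper, omitted) verification.
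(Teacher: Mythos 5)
There is a genuine gap, and it sits at the heart of your proposed reduction to Lemma~\ref{l-mpropalt1a}. You claim that for a path $P=p_1\cdots p_\ell$ with $p_1\in T$ and $p_\ell\notin T$, appending to $p_\ell$ an edge (or short subpath) reaching $T$ ``either lands us in the two-endvertices-in-$T$ case \ldots or creates a chord, contradicting inducedness.'' The second branch is not a contradiction: $P$ itself is induced and remains so; if every way of prolonging $P$ towards $T$ creates a chord, you simply cannot extend $P$, and your argument yields no bound at all for such $P$. Worse, if the dichotomy were sound it would prove that every induced path with exactly one endvertex in $T$ has at most $7-1=6$ vertices, and that conclusion is false. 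Consider the path $4\,$--$\,17\,$--$\,2\,$--$\,8\,$--$\,13\,$--$\,23\,$--$\,15$ in $M'$ (recall $N_{M'}(17)=\{2,4,11\}$, $N(8)=\{2,5,11,13,16,22\}$, $N(13)=\{1,3,6,8,23\}$, $N(15)=\{1,5,6,10,23\}$): one checks it is induced, has $7$ vertices, and has exactly one endvertex ($4$) in $T$. Its free end $15$ cannot be prolonged at all: its remaining neighbours $1,5,6,10$ all have chords back to the path ($1$ is adjacent to $2$ and $13$; $5$ to $4$ and $8$; $6$ to $2,4,13$; $10$ to $4$). So the failure case of your dichotomy is not an anomaly to be waved away; it is exactly where the extremal configurations live, and for those your proposal offers nothing beyond ``relegate the exhaustive check'' --- which is the entire content of the lemma.

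Two smaller points. First, verifying the lemma is not a matter of ``explicitly exhibiting an extremal path and checking no vertex can be inserted'': local non-extendability of one path says nothing about other, differently routed paths, so the check must range over all induced paths with an endvertex in $T$ (feasible, since $M'$ has $23$ vertices, but it is a different and larger task than you describe). Second, for calibration: the paper itself states that the proofs of Lemmas~\ref{l-mpropalt1a}--\ref{l-mpropalt2} are ``straightforward but tedious'' and omits them, i.e., its own proof is precisely the exhaustive verification. So a sketch in the spirit you give could be acceptable only if its conceptual skeleton were sound; since the skeleton (the extend-to-$T$ dichotomy) is invalid and would prove a false bound of $6$, the proposal as written cannot be repaired by adding bookkeeping --- the case analysis over paths that cannot reach $T$ with their free end must be done genuinely, and that is where bounds of $7$ and $8$ (rather than $6$) actually arise.
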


\begin{lemma}  \label{l-mpropalt2}
$M'$ contains no induced subgraph isomorphic to $P_8+P_1$ such that each of the two paths has an endvertex in $T$.
Also, $M'$ contains no induced subgraph isomorphic to $2P_7$ such that each of the two paths has an endvertex~in~$T$.
\end{lemma}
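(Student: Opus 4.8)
---

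The plan is to prove Lemma~\ref{l-mpropalt2} by leveraging the structural constraints already established in Lemmas~\ref{l-mpropalt1a} and~\ref{l-mpropalt1b}, which bound the lengths of induced paths in $M'$ that touch the set $T=\{2,4,11,23\}$. The key observation is that $M'$ has only $23$ vertices, so any induced subgraph isomorphic to $P_8+P_1$ or to $2P_7$ must account for a very specific distribution of vertices. For the $P_8+P_1$ case, the length constraint from Lemma~\ref{l-mpropalt1b} is crucial: one of the two components must be the $P_8$, and it must have an endvertex in $T$, which is exactly the situation Lemma~\ref{l-mpropalt1b} permits (paths with an endvertex in $T$ can have up to $8$ vertices). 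So the $P_8$ alone is not immediately ruled out; the difficulty is the disjointness and non-adjacency condition imposed by the additional isolated vertex $P_1$, whose endvertex must also lie in $T$.

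First I would enumerate, up to symmetry, the induced $8$-vertex paths in $M'$ that have an endvertex in $T$ — by Lemma~\ref{l-mpropalt1b} these are precisely the extremal paths, and there should be only a handful of them given the explicit edge structure of $M_5$ recorded in the text. For each such $P_8$, I would then examine which vertices of $T\setminus\{(\text{endpoints used})\}$ remain available to serve as the isolated $P_1$: such a vertex must be non-adjacent to all eight vertices of the path and distinct from them. The claim is that in every case no such vertex of $T$ survives — either because the candidate $P_1$-vertex is already on the path, or because it has a neighbour on the path (recall that each $t_i\in T$ has several neighbours spread throughout $M'$, in particular vertex~$23$ inherits the neighbourhoods of vertices $1,5$ via the Mycielski construction, and vertex~$11$ is adjacent to $6,7,8,10$, etc.). I would carry out this check mechanically for each extremal path.

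For the $2P_7$ case, the argument is a counting-and-adjacency argument. Two disjoint induced $P_7$'s use $14$ of the $23$ vertices, each with an endvertex in $T$, so between them they occupy two of the four vertices of $T$ as endpoints. By Lemma~\ref{l-mpropalt1a}, any induced path with \emph{both} endvertices in $T$ has at most $7$ vertices, which constrains how the $T$-vertices can be arranged. I would argue that the two $P_7$'s, together with the requirement that there be no edges between them, force a partition of a large portion of $M'$ into two non-interacting connected pieces; but the Mycielski apex vertices (vertex~$11$ in $M_4$ and the high-degree vertices near $23$) are adjacent to so many vertices that no such separation into two independent $7$-vertex paths is possible. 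Concretely, I would show that one of the two paths must contain a high-degree hub, and that this hub then has a neighbour on the \emph{other} path, violating disjointness-as-induced-subgraph.

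The main obstacle will be making the case analysis genuinely exhaustive rather than merely plausible: because $M'$ is a specific $23$-vertex graph with a rich and somewhat irregular adjacency pattern (the Mycielski layers interact in ways that are easy to mis-track by hand), the danger is overlooking an extremal path or an admissible isolated vertex. The cleanest way to control this is to exploit the degree sequence and the fact that $T$ consists of one original cycle vertex ($2$ and $4$ lie in $M_3$), one first-level apex ($11$), and one top apex-copy ($23$); the differing roles of these vertices in the construction sharply limit how far an induced path can reach while keeping an endpoint in $T$. I would organise the proof around these structural roles — which is precisely why the authors remark that the argument is \emph{straightforward but tedious} and defer the full verification.
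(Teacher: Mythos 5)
The paper gives no proof of this lemma to compare against --- the authors explicitly omit the proofs of Lemmas~\ref{l-mpropalt1a}--\ref{l-mpropalt2} as ``straightforward but tedious'' --- so your proposal must stand on its own as a complete verification, and it does not: it is a plan for a proof rather than a proof. The entire mathematical content of the lemma is a finite check on a concrete $23$-vertex graph, and at every point where that check would occur you defer it (``I would enumerate\ldots'', ``I would carry out this check mechanically\ldots'', ``I would argue that\ldots''). You never list the extremal induced $P_8$'s with an endvertex in $T$, never exhibit the adjacency verifications showing no vertex of $T$ can serve as the isolated $P_1$, and never produce the case analysis for $2P_7$; indeed your final paragraph concedes that making the analysis exhaustive is the ``main obstacle'' and leaves it unresolved. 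Since the lemma \emph{is} precisely this tedious verification, describing how one would organise it, without doing it, establishes nothing.

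Moreover, two of the concrete assertions you do make are wrong or unjustified, which is serious in a proof whose only substance is accurate bookkeeping of a specific graph. First, vertex $23$ does not ``inherit the neighbourhoods of vertices $1,5$''; it is the apex of the second Mycielski step, so $N_{M'}(23)=\{12,\ldots,22\}\setminus\{17\}$ (you appear to have confused it with vertex $20$, whose neighbours are $\{1,5,11\}$); similarly $N_{M'}(11)=\{6,7,8,9,10,17,18,19,20,21\}$, not just ``$6,7,8,10$''. Second, your $2P_7$ argument hinges on the claim that one of the two paths must contain a high-degree hub, but nothing forces this: the hypothesis only requires each path to have an endvertex in $T=\{2,4,11,23\}$, so a priori both paths could end at $2$ and $4$ respectively and avoid the hubs $11$ and $23$ entirely. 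That configuration is untouched by your ``hub has a neighbour on the other path'' argument and would have to be excluded by a separate explicit analysis. So even viewed purely as a strategy, the proposal has a hole in the $2P_7$ half, in addition to carrying out none of the verification in either half.
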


For a graph $G=(V,E)$ and subset $U\subseteq V$, we let $G-U$ denote the graph obtained from $G$ after removing all vertices in $U$.

To prove our result we also use the graphs $J'_I$ and $J_I^4$ defined in Sections~\ref{s-4col} and~\ref{s-precol}, respectively.
Let $A$, $B$, $C$, $X$ denote the sets of of $a$-type, $b$-type, $c$-type and $x$-type vertices in $J'_I$, respectively. Note these sets also exist in $J_I^4$.
We need an upper bound on the length of an induced path in some specific induced subgraphs of $J'_I$ and $J^4_I$.

\begin{lemma} \label{subpathJI}
Every induced path in $J'_I-C$ that starts with an $a$-type or $x$-type vertex has at most~$5$ vertices.
\end{lemma}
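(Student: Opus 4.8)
The plan is to exploit the very simple bipartite structure of $J'_I - C$. The first step is to record which adjacencies survive the deletion of $C$. Since every $a$-$x$ edge of $J_I$ was subdivided by a $c$-type vertex, removing $C$ destroys all $a$-$x$ adjacencies. Hence in $J'_I - C$ each $a$-type vertex is adjacent only to the (at most two) $b$-type vertices of its own clause component, each $x$-type vertex is adjacent only to $b$-type vertices and in fact to \emph{every} $b$-type vertex of the whole graph, and each $b$-type vertex of a clause component is adjacent only to the $a$-type vertices of that same component together with all $x$-type vertices. In particular $J'_I - C$ is bipartite with parts $A\cup X$ and $B$, and the only way an induced path can pass from one clause component to another is through an $x$-type vertex.

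Let $P$ be an induced path starting at an $a$-type or $x$-type vertex; I would bound $|V(P)|$ by a case analysis on the number of $x$-type vertices on $P$. The single fact used throughout is that every $x$-type vertex is adjacent to every $b$-type vertex, combined with the elementary observation that a vertex of an induced path has at most two neighbours on that path (any further adjacency between path vertices would be a chord).

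If $P$ contains two or more $x$-type vertices, pick any $b$-type vertex $b$ of $P$ (one must exist, since $A\cup X$ is an independent set and two $x$-type vertices cannot be consecutive). Then $b$ is adjacent to at least two $x$-type vertices of $P$, so these are precisely its two path-neighbours, there can be no third $x$-type vertex, and a second $b$-type vertex would create an induced $C_4$; this pins $P$ down to $x\,{-}\,b\,{-}\,x'$, giving $|V(P)|=3$. If $P$ contains exactly one $x$-type vertex, that vertex is adjacent to all $b$-type vertices, so at most two $b$-type vertices can lie on $P$; since $P$ alternates between $A\cup X$ and $B$, at most three of its vertices lie in $A\cup X$, so $|V(P)|\le 5$. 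Finally, if $P$ contains no $x$-type vertex, then $P$ is confined to a single clause component, which induces a $P_5$, and again $|V(P)|\le 5$. (Notice the starting hypothesis is not actually needed for the bound, but the cases as stated certainly cover it.)

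The only delicate point is the single-$x$-type case, where one must first use ``$x$ is adjacent to every $b$-type vertex'' to cap the number of $b$-type vertices on $P$ at two, and then convert this cap into a length bound via the bipartite alternation $A\cup X,\,B$; I expect this to be the main, though still routine, obstacle. The bound is tight, witnessed by the induced $P_5$ formed by the $a$- and $b$-type vertices of a single clause component.
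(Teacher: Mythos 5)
Your proof is correct, and it takes a somewhat different route from the paper's. The paper argues sequentially: it takes a maximal induced path $v_1v_2\ldots v_r$ starting at an $a$-type or $x$-type vertex and forces the type of each successive vertex (if $v_1$ is $a$-type then $v_2$ must be $b$-type, $v_3$ is $a$- or $x$-type, $v_4$ is $b$-type, $v_5$ is $a$-type, and no $v_6$ can exist; if $v_1$ is $x$-type the path dies after three vertices). You instead give a global counting argument, split by the number of $x$-type vertices on the path: with two or more $x$-type vertices the path collapses to $x\,{-}\,b\,{-}\,x'$; with exactly one, the ``$x$ is adjacent to every $b$'' fact caps the number of $B$-vertices at two and bipartite alternation between $A\cup X$ and $B$ gives at most five vertices; with none, the path is trapped inside a single clause component, which induces a $P_5$. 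Both arguments rest on the same two structural facts (deleting $C$ kills all $a$--$x$ adjacencies, and every $x$-type vertex sees every $b$-type vertex), so neither is deeper than the other, but yours buys a strictly stronger conclusion: \emph{every} induced path in $J'_I-C$ has at most five vertices, with no hypothesis on its endvertices — as you note, the starting condition in the lemma statement is not needed. The paper's sequential version, on the other hand, is tailored to how the lemma is invoked later (in the proof that $J^*_I$ is $P_{22}$-free, the paths in question are entered through a vertex of $A\cup X$), and it incidentally yields the sharper bound of three vertices for paths starting at an $x$-type vertex, which your case analysis recovers only implicitly.
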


\begin{proof}
Let $P=v_1v_2,\ldots v_r$ be a maximal induced path in $J'_I$.
Suppose that~$v_1$ is of $a$-type.
Then, as $P$ contains no $c$-type vertices, we find that $v_2=b$. Consequently, $v_3=a$ or $v_3=x$.  In both cases we must have $v_4=b$ and $v_5=a$. Afterward we cannot extend $P$
any further.
Suppose that $v_1$ is of $x$-type.
Because there is no $c$-type vertex in $P$, we find that $v_2=b$. Hence, $v_3=a$ or $v_3=x$. In both cases, $P$ cannot be extended further.
We conclude that $r\leq 5$.
\qed
\end{proof}

\begin{lemma} \label{subpathJ4}
Every induced path in $J^4_I-B$ has\\[-18pt]
\begin{enumerate}
\item [(i)] at most $7$ vertices;
\item [(ii)] at most $6$ vertices if it contains only one pendant vertex of $J^4_I-B$;
\item [(iii)] at most $5$ vertices if it contains no pendant vertex of $J^4_I-B$.
\end{enumerate}
\end{lemma}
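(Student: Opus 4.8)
The plan is to describe the structure of $J_I^4-B$ explicitly and then read off the three bounds by a short counting argument. First I would work out, in $J_I^4-B$, the neighbourhoods of the three \emph{core} vertex types $a$, $c$ and $x$. In $J_I^4$ an $a$-type vertex is adjacent only to $b$-type vertices, to the single $c$-type vertex that subdivides its unique edge to an $x$-type vertex, and to its precoloured pendant vertices; hence in $J_I^4-B$ it is adjacent to exactly one $c$-type vertex together with its pendants. A $c$-type vertex, being the subdivision vertex of an edge between an $a$-type vertex and an $x$-type vertex, is adjacent to exactly one $a$-type vertex, one $x$-type vertex and its pendants, and deleting $B$ leaves this unchanged. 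An $x$-type vertex loses all of its $b$-type neighbours, so it is adjacent only to $c$-type vertices and to its pendants. It follows that the subgraph of $J_I^4-B$ induced by the $a$-, $c$- and $x$-type vertices is a disjoint union of \emph{spiders}, each with a single centre of $x$-type and legs of the form $x$-$c$-$a$; in particular every $a$-type vertex is a leaf of its spider, and distinct $x$-type vertices lie in distinct components. The only other vertices of $J_I^4-B$ are pendant vertices attached to $a$-, $c$- or $x$-type vertices (still of degree one) and pendant vertices formerly attached to $b$-type vertices (now isolated).

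Next I would bound induced paths inside the core. Since every spider has a single centre and all legs have length two, its longest induced path runs between two leaves and has the form $a$-$c$-$x$-$c$-$a$. Hence any induced path using only $a$-, $c$- and $x$-type vertices has at most five vertices, which is exactly part~(iii).

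For parts~(i) and~(ii) I would use that a degree-one vertex of $J_I^4-B$ can appear in an induced path only as an endvertex, while an isolated vertex can appear only as a one-vertex path and is harmless. Thus an induced path $P$ contains at most two pendant vertices of $J_I^4-B$, and they must be its two endvertices. Deleting these endvertices leaves a contiguous subpath that lies entirely in the core and therefore, by the previous paragraph, has at most five vertices. Consequently $P$ has at most $5+2=7$ vertices in general, proving~(i), and at most $5+1=6$ vertices when it contains exactly one pendant vertex, proving~(ii).

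The argument is purely structural, so no step is genuinely hard; the one point that needs care is the claim that deleting $B$ turns every $a$-type vertex into a leaf of the core and thereby separates the $x$-type vertices into distinct components. This collapse is precisely what forbids an induced path from passing through two different $x$-type vertices, and hence what keeps the core contribution down to five vertices; verifying it amounts to checking the few neighbourhoods listed above.
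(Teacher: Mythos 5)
Your proof is correct and takes essentially the same approach as the paper's: the paper likewise observes that $J_I^4-B$ is a forest whose nontrivial components are subdivided stars (your spiders, centred at the $x$-type vertices) with pendant vertices attached to each vertex, and reads the three bounds off the fact that the longest induced path in such a tree is of the form pendant--$a$--$c$--$x$--$c$--$a$--pendant. Your write-up is simply more explicit, e.g.\ in verifying the neighbourhoods and in handling the now-isolated former pendants of $B$, which the paper's one-line observation glosses over.
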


\begin{proof}
We observe that $J^4_I-B$ is a forest in which each tree can be constructed as follows.
Take a star. Subdivide each of its edges exactly once. Afterward add one or more pendant vertices to each vertex.
The tree obtained is $P_8$-free and every induced $P_7$ contains two pendant vertices as its end-vertices.
Moreover, every induced $P_6$ has at least one pendant vertex. \qed
\end{proof}

We will now modify the (bipartite) graph $J_I^4$ with the precolouring $c_{W^4}$ in the following way.
\begin{itemize}
\item [$\bullet$] Remove all vertices that are pendant to any vertex in $B\cup C$.
\item [$\bullet$] Add a copy of $M'$. Write $t_1=2$, $t_2=4$, $t_3=11$ and $t_4=23$.
\item [$\bullet$] Let $S$ be the set of vertices pendant to any vertex in $A\cup X$.
For each $v\in S$ do as follows. If $c_{W^4}(v)=i$ then add the edge $vt_j$ for all $j\in \{1,2,3,4\}\setminus \{i\}$.
\item [$\bullet$] Add an edge between every vertex in~$B$ and $t_i$ for $i=1,2$.
\item [$\bullet$] Add an edge between every vertex in~$C$ and $t_i$ for $i=3,4$.
\end{itemize}
We call the resulting graph $J^*_I$. 
Note that $J^*_I$ is not $P_{21}$-free in general. In order to see this we say that the pendant vertices of $B\cup C$ are of type~$p$ and that 
the vertices of $T$ are of type $t$ and then take an induced 21-vertex path
with vertices of type $$a-c-x-c-a-p-t-p-a-c-x-c-a-p-t-p-a-c-x-c-a.$$
Note that such a path only uses two vertices of $M'$, namely $t_1$ and $t_2$ (as $t_3$ and $t_4$ are adjacent to all vertices of $c$-type).
As such, trying to optimize Lemmas~\ref{l-mpropalt1a}--\ref{l-mpropalt2} (which we believe is possible) does not help us with improving our result.
In the following lemma we show that this length is maximum.

\begin{lemma}\label{l-star1}
The graph $J^*_I$ is $(C_3,P_{22})$-free.
\end{lemma}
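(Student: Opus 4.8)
\emph{Plan.} I would prove the two non-containments separately, reducing the path bound to the ``block-length'' estimates already provided by Lemmas~\ref{l-mpropalt1a}--\ref{subpathJ4}. For triangle-freeness, note first that the only edges joining $M'$ to $R:=V(J^*_I)\setminus V(M')$ are incident with $T$. I would check from the adjacency lists of $M_5$ that $T=\{2,4,11,23\}$ is independent in $M'$. Since $M'$ is triangle-free (a subgraph of $M_5$) and $R$ is triangle-free (an induced subgraph of the bipartite graph $J_I^4$), a triangle would have to contain either two vertices of $T$ and one of $R$, or one of $T$ and two of $R$. The first is impossible as $T$ is independent. For the second I would use that $B\cup C\cup S$ is one part in the bipartition of $R$ and that \emph{all} $R$-neighbours of each $t_i$ lie in this part (namely $B$ and some $S$-pendants for $i\in\{1,2\}$, and $C$ and some $S$-pendants for $i\in\{3,4\}$); hence they are pairwise non-adjacent and no triangle arises.

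\emph{Setup for $P_{22}$-freeness.} Let $P$ be an induced path. Because every $M'$--$R$ edge meets $T$, deleting $V(P)\cap T$ splits $P$ into subpaths lying entirely in $M'\setminus T$ or entirely in $R$; I call these the $M'$-blocks and $R$-blocks, and they alternate along $P$. Three domination facts drive the estimate: $t_1,t_2$ are adjacent to all of $B$; $t_3,t_4$ are adjacent to all of $C$; and every $S$-pendant fails to be adjacent to exactly one $t_i$, hence is adjacent to at least one of $t_1,t_2$. I would then bound the blocks. Each $M'$-block is controlled by Lemma~\ref{l-mpropalt1a} ($\le 7$ if both ends lie in $T$) and Lemma~\ref{l-mpropalt1b} ($\le 8$ if one end does). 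Each $R$-block is $P_{10}$-free; moreover, if $t_1$ or $t_2$ lies on $P$ then no $R$-block may contain a $B$-vertex except as an endpoint neighbouring that $t_i$, so the $R$-blocks lie in the forest $J^4_I-B$ and Lemma~\ref{subpathJ4} applies ($\le 7$, dropping to $\le 6$ with a single pendant endpoint and $\le 5$ with none); symmetrically, if $t_3$ or $t_4$ lies on $P$ then every $C$-vertex of $P$ is forced to neighbour that $t_i$, which caps the number of $C$-vertices at two and pushes the remaining blocks into $J'_I-C$, where Lemma~\ref{subpathJI} gives $\le 5$.

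\emph{Main bound and obstacle.} The cases $P\subseteq R$ (then $|V(P)|\le 9$ by $P_{10}$-freeness) and $P\subseteq M'$ (immediate from the sparse structure of $M'$) are easy, so assume $P$ uses a $T$-vertex. The decisive observation is that when both $t_1$ and $t_2$ lie on $P$, a \emph{free} endpoint of $P$ cannot be an $S$-pendant: such a pendant is adjacent to $t_1$ or $t_2$ and would create a chord. Thus each of the two end-blocks has a non-pendant free endpoint and is capped at $6$ by Lemma~\ref{subpathJ4}(ii), while a middle block is capped at $7$ by Lemma~\ref{subpathJ4}(i). In the extremal layout $R_1-t_1-R_2-t_2-R_3$ this yields $|V(P)|\le 6+1+7+1+6=21$, matching the explicit $P_{21}$ exhibited just before the lemma. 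I would then dispose of the remaining configurations: using $t_3$ or $t_4$ replaces long ($C$-using) blocks by short ones, using fewer connectors produces fewer blocks, and a genuine $M'$-interior block is paid for by the loss of an adjacent $R$-block. The routine part is the blockwise estimation through the quoted lemmas; the real work, and the step I expect to be the main obstacle, is the bookkeeping of this case analysis over which $t_i$ occur and how the domination facts constrain the endpoint types of the neighbouring blocks --- in particular the endpoint argument that forces the two end-blocks down to $6$, which is exactly what separates the true bound $21$ from the naive $23$.
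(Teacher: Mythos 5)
Your overall strategy is the paper's own: split the induced path at its $T$-vertices, bound the end-blocks by $6$ and the middle blocks by $7$, and your extremal all-$R$ layout $6+1+7+1+6=21$ is exactly the paper's Claim~1 together with its middle-block bound; the triangle-freeness argument also matches. However, your dismissal of ``the remaining configurations'' hides two genuine gaps, both traceable to the fact that your argument never actually uses Lemma~\ref{l-mpropalt2}.

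First, your end-block bound of $6$ is proved only for $R$-end-blocks (via the $S$-pendant chord observation and Lemma~\ref{subpathJ4}(ii)). When an end-block lies in $M'$, Lemma~\ref{l-mpropalt1b} gives only $8$ vertices \emph{including} the connector, i.e.\ a block of size $7$, and $7+1+7+1+6=22$ already breaks the bound. The paper gets $6$ here by invoking Lemma~\ref{l-mpropalt2}: if $P_Lt_i$ had $8$ vertices, then together with the second connector $t_j$ (non-adjacent to all of $P_Lt_i$, since $P$ is induced and $T$ is independent) it would form the forbidden $P_8+P_1$. Second, and more seriously, the four-connector case in the order $t_1,t_2,t_3,t_4$ (the paper's Case~5.3), where $P=P_Lt_1\,b\,t_2P^{2}_Mt_3\,c\,t_4P_R$ with $P_L,P^{2}_M,P_R\subseteq M'$, is not ``paid for by the loss of an adjacent $R$-block'': even granting the improved bounds $|V(P_L)|,|V(P_R)|\le 6$ and $|V(P^{2}_M)|\le 5$, per-block accounting only gives $6+1+1+1+5+1+1+1+6=23$. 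The paper closes this case by applying Lemma~\ref{l-mpropalt2} \emph{pairwise} to the three $M'$-pieces $P_Lt_1$, $t_2P^{2}_Mt_3$, $t_4P_R$: any two of them together have at most $13$ vertices (no $2P_7$, no $P_8+P_1$), so the three pieces total at most $19$ and $|V(P)|\le 19+2=21$. This averaging step is the real content of the multi-connector cases and cannot be recovered from individual block estimates. (Two smaller inaccuracies: the blocks need not alternate between $M'$ and $R$ --- both neighbours of a connector can lie in $J^4_I$, as in the paper's Case~2 --- and the bound of $21$ for paths inside $M'$ is not ``immediate from the sparse structure''; it holds because the vertices $11$ and $23$ each have ten neighbours, so an induced path through either of them must miss at least eight vertices.)
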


\begin{proof}
Because $S\cup B\cup C$ and $\{t_1,t_2,t_3,t_4\}$ are both independent sets and the graphs $M'$ and~$J^4_I$ are
bipartite,  $J^{*}_I$ is $C_3$-free.
Below we show that $J^{*}_I$ is $P_{22}$-free.
Let $P$ be an induced path in $J^{*}_I$. Let $\alpha$ be the number of vertices of $\{t_1,t_2,t_3,t_4\}$ in $P$.

\medskip
\noindent {\bf Case 1.} $\alpha=0$.\\
Then either $P\subseteq J^4_I$, and so $|V(P)|\le 9$ by Lemma~\ref{l-bipartitej2}, or $P\subseteq M'$, and so $|V(P)|\le 21$.

\medskip
\noindent {\bf Case 2.}  $\alpha=1$.\\
Then $P=P_Lt_iP_R$ for some $1\leq i\leq 4$, where the subpaths $P_L$ and $P_R$ are each fully contained in either $M'$ or $J^4_I$. If both of them are contained in $M'$, then $P\subseteq M'$ and so $|V(P)|\le 21$.
If one of them is contained in $M'$ and the other one is contained in $J^4_I$,
then $|V(P)|\le 8+9=17$ by Lemmas~\ref{l-bipartitej2} and~\ref{l-mpropalt1b}. Otherwise both of them are
contained in $J^4_I$ and so $|V(P)|\le 9+1+9=19$ by Lemma~\ref{l-bipartitej2}.

\medskip
\noindent {\bf Case 3.}  $\alpha=2$.\\
Then $P=P_Lt_iP_{M}t_jP_R$ for some $i,j\in \{1,2,3,4\}$, where each of $P_L$, $P_M$ and~$P_R$ is fully contained
in either $M'$ or $J^4_I$. We need the following claim.

\medskip
\noindent {\em Claim 1.
$|V(P_L)|\le 6$ and $|V(P_R)|\le 6$.}

\medskip
\noindent
We prove Claim~1 as follows.
By symmetry it suffices to show it for $P_L$ only.
First assume that $P_L$ is contained in $J^4_I$.
We distinguish two cases.

\medskip
\noindent {\bf Case a.} $t_i=t_1$.\\
Let $t^{-}_1$ be the right endvertex of $P_L$.
First assume that $t^{-}_1\notin B$.
Then $t^{-}_1$ must be in~$S$ by the construction of $J^{*}_I$.
Because $t_1$ is adjacent to every vertex in~$B$, we have $V(P_L)\cap B=\emptyset$.
Hence, $P_L\subseteq J^4_I-B$.
We also  have $|V(P_L)\cap S|=1$, as~$P$ is induced and each vertex in $S$ is adjacent to three vertices of $T$.
Then, by Lemma~\ref{subpathJ4}, we obtain $|V(P_L)|\le 6$.
Now assume $t^{-}_1\in B$.
Then $|V(P_L)\cap S|=0$ for the same reason as before.
Hence $P_L-\{t^-_1\}$ contains only non-pendant vertices of $J^4_I-B$.
Then, by Lemma \ref{subpathJ4}, we obtain $|V(P_L)|\le 5+1=6$.

\medskip
\noindent {\bf Case b.} $t_i=t_3$.\\
Let $t^{-}_3$ be the right endvertex of $P_L$.
If $t_j\in \{t_1,t_2\}$ then $P_L\cap B=\emptyset$, and so~$P_L$ is an induced path of $J^4_I-B$.
By Lemma~\ref{subpathJ4} we find that $|V(P_L)|\le 6$.
Now assume that $t_j=t_4$. Since $t_4$ is adjacent to all vertices of~$C$, we have $V(P_L)\cap C=\emptyset$.
Hence, by construction, we find that $t^-_3\in S$. In fact we have $V(P)\cap S=\{t^{-}_3\}$,
as~$P$ is induced and each vertex in $S$ is adjacent to three vertices of $T$.
Thus, $P_L-\{t^{-}_3\}$ is contained in $J'_I-C$.
As $t^-_3\in S$, its neighbour on $P_L$ must be in $A\cup X$ (if this neighbour exists).
Then, by Lemma~\ref{subpathJI}, we find that $|V(P_L)|\le 5+1=6$.

\medskip
\noindent
Now suppose that $P_L$ is contained in~$M'$.
Using Lemma \ref{l-mpropalt2} with respect to the paths
$P_Lt_i$ and $t_j$ we conclude
that $|V(P_L)|\le 6$. This completes the proof of Claim~1.

\medskip
\noindent
By Claim~1 we have that  $|V(P_L)|\le 6$ and $|V(P_R)|\le 6$.
If $|V(P_M)|\le 7$ this means that $|V(P)|\le 6+1+7+1+6=21$ and we are done.
So, it suffices to prove that $|V(P_M)|\le 7$, as we will do below.

If $P_M$ is contained in $M'$, then by Lemma \ref{l-mpropalt1a}
we have that $|V(P_M)|\le 5$. Assume that $P_M\subseteq J^4_I$.
First suppose that $\{t_i,t_j\}=\{t_1,t_2\}$. If $P_M$ contains a $b$-type vertex,
then $P_M$ must be a single $b$-type vertex and thus $|V(P_M)|=1$.
Otherwise we have $P_M\subseteq J^4_I-B$, and hence $|V(P_M)|\le 7$ by Lemma \ref{subpathJ4}.

Now suppose that $\{t_i,t_j\}=\{t_3,t_4\}$. If $P_M$ contains a $c$-type vertex,
then~$P_M$ must be a single $c$-type vertex and thus $|V(P_M)|=1$.
Otherwise
both end-vertices of $P_M$ are in $S$
and all internal vertices of $P_M$ are contained in $J'_I-C$.
Because every neighbour of a vertex in $S$ in $J'_I$ belongs to $A\cup X$,
we find that $|V(P_M)|\le 1+5+1=7$ by Lemma~\ref{subpathJI}.

We observe that the pairs $\{t_1,t_2\}$ and $\{t_3,t_4\}$
are symmetric in terms of the edges between them to $B$ and $C$.
Moreover, $t_1$ and $t_2$ can be seen as symmetric, and the same holds for $t_3$ and $t_4$.
This has the following consequence.
Suppose that  $\{t_i,t_j\}\notin \{\{t_1,t_2\},\{t_3,t_4\}\}$. Then we may assume without loss of generality that
 $\{t_i,t_j\}=\{t_1,t_3\}$, say $P=P_Lt_1P_Mt_3P_R$.

If the left endvertex of $P_M$ is not in~$B$,
then $P_M\subseteq J^4_I-B$ and so $|V(P_M)|\le 7$ by Lemma \ref{subpathJ4}.
Similarly, if the right endvertex of $P_M$ is not in~$C$ then all vertices except the left endvertex of $P_M$ are in $J'_I-C$.
In particular, the right endvertex of $P_M$ belongs to $S$ in that case, and as such has its neighbour on $P_M$ in $A\cup X$ (if this neighbour exists).
Then $|V(P_M)|\le 1+5=6$ by Lemma~\ref{subpathJI}.
Therefore we may assume that $P_M$ starts with a vertex in $B$, ends at a vertex in $C$.
As $P$ is induced and each vertex in $S$ is adjacent to three vertices of $T$, we find that $P_M$ contains no vertices of~$S$.
Moreover, no internal vertex of $P_M$ belongs to $B\cup C$, as these vertices would be adjacent to one of $\{t_1,t_3\}$.
Hence all internal vertices of $P_M$ are in $A\cup X$.
Then $P_M$ only has one internal vertex, which is either in $A$ or in $X$. Hence, $|V(P_M)|=3\leq 7$. This completes the proof of Case~3.

\medskip
\noindent {\bf Case 4.}  $\alpha=3$.\\
Then $P=P_Lt_hP^{1}_Mt_iP^{2}_Mt_jP_R$ for some $h,i,j\in \{1,2,3,4\}$, where the subpaths $P_L$, $P^1_M$, $P^2_M$ and $P_R$ are each fully contained in either $M'$ or $J^4_I$.
As $P$ is induced and each vertex in $S$ is adjacent to three vertices of $T$.
we find that $|V(P^{i}_M)|=1$ if $P^{i}_M$ ($i=1,2$) contains a vertex of~$S$.
We observe that Claim~1 is also valid here (as exactly the same arguments can be used).
Hence, $|V(P_L)|\le 6$ and $|V(P_R)|\le 6$.
By the aforementioned symmetry relations between vertices and vertex pairs of $\{t_1,t_2,t_3,t_4\}$, we may assume without loss of generality that $t_1,t_3,t_4\in P$. According to the relative positions
among these three vertices on the path $P$ we have the following two subcases.

\medskip
\noindent
{\bf Case 4.1} $P=P_Lt_1P^{1}_Mt_3P^{2}_Mt_4P_R$.\\
We first prove that either $P^{1}_M$ is contained in $M'$ or that $|V(P^{1}_M)|=1$. Suppose that $P^1_M$ is not contained in $M'$.
The right endvertex of $P^1_M$ cannot be in $B$ due to the presence of $t_4$. Hence, it must be in $S$. As observed above, in that case we have $|V(P^{i}_M)|=1$.
We now prove that either $P^{2}_M$ is contained in $M'$ or that $|V(P^{2}_M)|=1$.
Suppose that~$P^{2}_M$ is not contained in $M'$. Then $P^{2}_M$ is contained in $J^4_I$. As mentioned above,
$|V(P^{2}_M)|=1$ if $P^{2}_M$ contains a vertex of~$S$. So, assume $P^{2}_M$ contains no vertex of~$S$.
Then $P^{2}_M$ must contain a vertex of $C$ and consequently cannot contain any other vertex. Hence $|V(P^{2}_M)|=1$.

Now, if each $P^{i}_M$ is contained in $M'$ then $|V(P)|\le 6+7+6=19$ by  Lemma~\ref{l-mpropalt1a}
and the aforementioned fact that  $|V(P_L)|\le 6$ and $|V(P_R)|\le 6$.
Suppose that $P^1_M$ is not contained in $M'$. Then $|V(P^{1}_M)|=1$. If  $P^{2}_M$ is contained in $M'$
then $t3P^2_Mt_4$ has at most seven vertices by Lemma~\ref{l-mpropalt2}. Thus, $|V(P)|\le 6+1+1+7+6=21$. Otherwise, $|V(P^2_M)|=1$, and consequently, $|V(P)|\leq 17$.
If $P^2_M$ is not contained in $M'$, we can repeat the arguments.

\medskip
\noindent
{\bf Case 4.2}  $P=P_Lt_3P^{1}_Mt_1P^{2}_Mt_4P_R$.\\
Note that $P$ contains no vertices of $B$ and recall that $|V(P^{i}_M)|=1$ if $P^{i}_M$ ($i=1,2$) contains a vertex of~$S$. Hence, $P$ contains no vertices of $C$ either,
and the claim that either $P^{i}_M$ is contained in $M'$ or $|V(P^{i}_M)|=1$ is still true for $i=1,2$.
By repeating the arguments as in Case 4.1 we find that $|V(P)|\le 21$.

\medskip
\noindent {\bf Case 5.}  $\alpha=4$.\\
Then $P=P_Lt_hP^{1}_Mt_iP^{2}_Mt_jP^{3}_Mt_kP_R$ for $\{h,i,j,k\}=\{1,2,3,4\}$, where the subpaths $P_L$, $P^1_M$, $P^2_M$, $P^3_M$ and $P_R$ are each fully contained in either $M'$ or $J^4_I$.  Note that $V(P)\cap S=\emptyset$ as $P$ is induced and each vertex in $S$ is adjacent to three vertices of $T$.

By symmetry, it suffices to consider the following three cases.

\medskip
\noindent
{\bf Case 5.1.} $P=P_Lt_1P^{1}_Mt_3P^{2}_Mt_2P^{3}_Mt_4P_R$.\\
Then $P$ cannot contain any vertex of $B\cup C$. Consequently, as $P$ contains no vertex from $S$ either,
$P$ is a path in $M'$. Hence, $|V(P)|\le 21$.

\medskip
\noindent
{\bf Case 5.2.} $P=P_Lt_1P^{1}_Mt_3P^{2}_Mt_4P^{3}_Mt_2P_R$.\\
Then $P$ contains no vertex of $C$, and only $P^2_M$ may contain a vertex of $B\cup C$, which must be in~$B$.
Hence, as $P$ contains no vertex of $S$, all other three subpaths are fully contained in $M'$.
If $P^2_M$ contains no vertex of $B$, then $P$ is a path in $M'$ and hence $|V(P)|\leq 21$.
Otherwise, $|V(P^{2}_M)|=1$. Then $|V(P)|\le 8+1+8=17$ by Lemma \ref{l-mpropalt1b}.

\medskip
\noindent
{\bf Case 5.3.} $P=P_Lt_1P^{1}_Mt_2P^{2}_Mt_3P^{3}_Mt_4P_R$.\\
As $P$ has no vertex of $S$, all paths $P_L$, $P^{2}_M$, $P_R$ must belong to $M'$, whereas
$P^{1}_M$ and $P^3_M$ are either a single vertex (not in $M'$) or contained in $M'$.
If both of $P^{1}_M$ and $P^{3}_M$ are contained in $M'$, then $P$ is contained in $M'$, and
thus $|V(P)|\le 21$. If exactly one of $P^{1}_M$ and $P^{3}_M$ is contained in $M'$,
then  $|V(P)|\le 8+1+8=17$ by Lemma \ref{l-mpropalt1b}. Thus $|V(P^{1}_M)|=1$ and $|V(P^{3}_M)|=1$.
By Lemma~\ref{l-mpropalt2} we find that $P_L$, $P^2_M$ and $P_L$ each have at most seven vertices.
Moreover, Lemma~\ref{l-mpropalt2} also implies that the sum of the order of any two of these subpaths is at most 13. Hence, $|V(P)|\le 3\times 13/2+2=21.5$.
Thus, $|V(P)|\le 21$. This completes the proof of Case 5, and as such we have proven the lemma.\qed
\end{proof}
The following lemma follows from the way we constructed $J^*_I$ from $J^4_I$.
\begin{lemma}\label{l-star2}
The graph $J^*_I$ has a $4$-colouring if and only if $J^4_I$ has a $4$-colouring that is an extension of $c_{W^k}$.
\end{lemma}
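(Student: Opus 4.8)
The statement to prove is Lemma~\ref{l-star2}: that $J^*_I$ has a $4$-colouring if and only if $J_I^4$ has a $4$-colouring extending the precolouring $c_{W^k}$. The key conceptual point is that the copy of $M'$ attached to $J^4_I$ acts as a \emph{colour-enforcing gadget}: by the first lemma of this section, every $4$-colouring $\phi$ of $M'$ satisfies $\phi(17)=\phi(23)$ and $\{\phi(t_1),\phi(t_2),\phi(t_3),\phi(t_4)\}=\{\phi(2),\phi(4),\phi(11),\phi(23)\}=\{1,2,3,4\}$. So in any $4$-colouring of $J^*_I$, the four vertices $t_1,t_2,t_3,t_4$ must receive all four distinct colours. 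The plan is to first fix, up to permutation of colours, what these colours must be, and then verify that the edges we added between $T$ and the sets $S$, $B$, $C$ exactly simulate the effect of the pendant vertices (equivalently, the lists $L$) that were present in $J^4_I$.

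\textbf{The forward-simulation argument.} First I would observe that since $M'$ forces $t_1,t_2,t_3,t_4$ to be rainbow-coloured, we may assume without loss of generality (by permuting the four colours globally) that $\phi(t_1)=1$, $\phi(t_2)=2$, $\phi(t_3)=3$, $\phi(t_4)=4$, matching the indexing $t_1=2,t_2=4,t_3=11,t_4=23$ and the colours $1,2,3,4$ respectively. Then I would check edge by edge that each added edge reproduces exactly the list constraint it is meant to encode. A vertex $v\in S$ that was a pendant vertex precoloured $c_{W^4}(v)=i$ was joined in $J^*_I$ to $t_j$ for all $j\in\{1,2,3,4\}\setminus\{i\}$; since the $t_j$ carry colours $\{1,2,3,4\}\setminus\{i\}$, the only colour left available to $v$ (and hence to its unique non-$T$ neighbour in $A\cup X$) is exactly $i$, which is precisely the restriction the original pendant vertex imposed. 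Similarly, every vertex of $B$ is joined to $t_1,t_2$ (colours $1,2$), so its admissible colours are forced into $\{3,4\}$, and every vertex of $C$ is joined to $t_3,t_4$ (colours $3,4$), forcing its admissible colours into $\{1,2\}$ — and these are precisely the lists $L'$ assigned to $b$-type and $c$-type vertices in $J^4_I$. Thus the restriction of any $4$-colouring of $J^*_I$ to $V(J^4_I)$ is a colouring respecting the forced lists, hence an extension of $c_{W^k}$ on $J^4_I$.

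\textbf{The reverse direction.} Conversely, given a $4$-colouring of $J^4_I$ extending $c_{W^k}$, I would build a $4$-colouring of $J^*_I$ by first colouring the copy of $M'$ with the $4$-colouring $\phi$ from the first lemma (which exists because $M'$ is $M_5$ with one edge removed, hence $4$-colourable), renaming colours so that $t_1,t_2,t_3,t_4$ get $1,2,3,4$; then extending to $V(J^4_I)$ using the given colouring, which respects the lists and therefore never conflicts with the newly added edges to $T$; and finally colouring each $v\in S$ by its forced colour $c_{W^4}(v)$, which is consistent because that colour is absent from the $t_j$ it is adjacent to. Since all edges of $J^*_I$ are either internal to $M'$, internal to $J^4_I$, or between $T$ and $S\cup B\cup C$, and we have verified consistency across all three types, the result is a proper $4$-colouring.

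\textbf{Main obstacle.} The routine calculations are the edge-checks, so the only genuine content is ensuring the \emph{rainbow-forcing} of $T$ is truly unavoidable and that the correspondence between added edges and original lists is exact. The subtle point I would be most careful about is that the forcing lemma only guarantees $T$ is rainbow \emph{after} fixing a colouring of $M'$; I must confirm that no colouring of $J^*_I$ can colour $M'$ in a way that avoids this (it cannot, since the lemma applies to every $4$-colouring of $M'$, and $M'$ is an induced subgraph of $J^*_I$), and that the global colour-permutation freedom is enough to align $T$'s colours with the intended $1,2,3,4$ without loss of generality. Everything else reduces to the observation that the added edges and the discarded pendant vertices encode identical constraints.
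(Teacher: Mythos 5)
Your proof is correct and is exactly the argument the paper intends: the paper gives no written proof of this lemma (it only remarks that it ``follows from the way we constructed $J^*_I$ from $J^4_I$''), and your combination of the colour-forcing lemma for $M'$ (rainbow colours on $T$, adjusted by a global permutation) with the edge-by-edge correspondence between the added $T$-edges and the original lists/precolouring is precisely that construction spelled out. One small point to tidy: the pendant vertices of $B\cup C$ were deleted in forming $J^*_I$, so in the forward direction you cannot literally ``restrict to $V(J^4_I)$'' --- you must also re-insert those pendants with their precoloured colours, which is trivially proper since each such pendant's colour lies outside the list $\{3,4\}$ (resp.\ $\{1,2\}$) forced on its unique neighbour in $B$ (resp.\ $C$).
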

Combining Lemmas~\ref{l-truthbip2},~\ref{l-star1} and~\ref{l-star2} gives us the main result of this section.

\begin{theorem}\label{t-new3}
$4$-{\sc Colouring} is \NP-complete for $(C_3,P_{22})$-free graphs.
\end{theorem}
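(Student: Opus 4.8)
The plan is to recognise that this theorem is the final link in a single polynomial reduction that the whole section has been assembling, so the work is to tie the already-isolated pieces together correctly. First I would note that {\sc $4$-Colouring} is trivially in \NP, since a candidate $4$-colouring is verified in polynomial time; it therefore remains to prove \NP-hardness on $(C_3,P_{22})$-free graphs. For this I would reduce from the restricted form of {\sc Not-All-Equal 3-Sat} (positive literals only, clauses of size two or three, each literal in at most three clauses) that is \NP-complete by Schaefer's theorem~\cite{Schaefer78} and that already underlies Lemma~\ref{l-truth}.

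Second, I would set up the reduction as the composition $I \mapsto J_I \mapsto J_I' \mapsto J_I^4 \mapsto J^*_I$ defined across Sections~\ref{s-4col}--\ref{s-colouring}. Each of these four constructions is clearly computable in time polynomial in $|I|$, so their composite is a polynomial-time reduction, and its output $J^*_I$ is a \emph{plain} graph (no lists and no precoloured vertices), which is exactly the kind of input {\sc $4$-Colouring} expects.

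Third, for correctness I would invoke the two equivalences that have been isolated precisely for this purpose. Lemma~\ref{l-truthbip2} (taken with $k=4$) states that $J_I^4$ has a $4$-colouring extending the precolouring $c_{W^4}$ if and only if $I$ admits a not-all-equal satisfying assignment, and Lemma~\ref{l-star2} states that $J^*_I$ is $4$-colourable if and only if $J_I^4$ has such an extending colouring. Chaining these two biconditionals shows that $J^*_I$ is $4$-colourable exactly when $I$ is a yes-instance, which yields \NP-hardness. The conceptual core behind Lemma~\ref{l-star2} is the Mycielski gadget: since $M'$ is $M_5$ with the edge $\{17,23\}$ deleted, the colouring lemma for $M'$ forces $T=\{t_1,t_2,t_3,t_4\}$ to receive all four colours in every $4$-colouring, so attaching a vertex to the three $t_j$ with $j\neq i$ simulates precolouring it with colour $i$ and thereby eliminates the explicit precolouring. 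Finally, Lemma~\ref{l-star1} says exactly that $J^*_I$ is $(C_3,P_{22})$-free, so every reduction output lies in the target class, completing the argument.

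The hard part is entirely concentrated in Lemma~\ref{l-star1}, and within it in the $P_{22}$-freeness bound. That estimate needs a case analysis on the number $\alpha$ of vertices of $T$ that an induced path can meet, and it leans on the auxiliary length bounds of Lemmas~\ref{l-mpropalt1a}--\ref{l-mpropalt2} for subpaths inside $M'$ and of Lemmas~\ref{subpathJI}--\ref{subpathJ4} for subpaths inside $J_I'$ and $J_I^4$; the triangle-freeness is comparatively routine, following from the bipartiteness of $M'$ and $J_I^4$ together with the independence of $S\cup B\cup C$ and of $T$. I would also take care that deleting precisely the edge $\{17,23\}$ gives exactly the four-colour palette property the reduction needs, and that the joins of $M'$ to $J_I^4$ through $B$, $C$ and $S$ create no triangle while keeping the longest induced path at $21$ vertices, so that the bound $P_{22}$ is both correct and tight.
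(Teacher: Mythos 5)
Your proposal is correct and follows essentially the same route as the paper: the paper's own proof of Theorem~\ref{t-new3} is precisely the combination of Lemmas~\ref{l-truthbip2}, \ref{l-star1} and~\ref{l-star2}, which is what you do, with the routine additions of \NP-membership and polynomial-time computability of the chain $I \mapsto J_I \mapsto J_I' \mapsto J_I^4 \mapsto J^*_I$ made explicit. Your reading of the Mycielski gadget (that $M'$ forces $T$ to receive all four colours, so that attachments to vertices of $T$ simulate the precolouring) and your identification of Lemma~\ref{l-star1} as the technical core both match the paper's argument.
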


To prove the final result of this section we need the following classical result of Erd\"os~\cite{Er59} as a lemma (we follow the same approach
as that used by Kr\'al' et al.~\cite{KKTW01} and Kami\'nski and Lozin~\cite{KL07} for proving that for all $k\geq 3$ and $s\geq 3$,
 {\sc $k$-Colouring}  is \NP-complete for
$(C_3,\ldots,C_s)$-free graphs).

\begin{lemma}[\cite{Er59}]\label{l-erdos}
For every pair of integers $g$ and $k$, there exists
a graph with girth~$g$ and chromatic number~$k$.
\end{lemma}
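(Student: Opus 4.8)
The plan is to prove Lemma~\ref{l-erdos} by the probabilistic method, via Erd\"os's celebrated \emph{deletion argument}. The essential tension is that the two demands pull in opposite directions: forcing the chromatic number up wants the graph to be dense, while forbidding short cycles wants it sparse. The resolution is to work with a random graph that is sparse enough to contain only few short cycles, yet dense enough to have no large independent set, and then to destroy the surviving short cycles by deleting one vertex from each. I would produce a graph of girth greater than $g$ and chromatic number greater than $k$, which is the usual (and, for the applications, the relevant) content of the statement.

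Concretely, I would fix an exponent $\epsilon$ with $0<\epsilon<1/g$ and consider the random graph $G=G(n,p)$ on $n$ vertices with edge probability $p=n^{\epsilon-1}$, where $n$ is a large integer chosen at the end. Two first-moment estimates drive the argument. First, let $Y$ be the number of cycles of length at most $g$. Since a cycle of length $i$ is an ordered choice of $i$ vertices counted up to rotation and reflection, $E[Y]=\sum_{i=3}^{g}\frac{n!}{(n-i)!\,2i}\,p^{i}\le\sum_{i=3}^{g}(np)^{i}=\sum_{i=3}^{g}n^{i\epsilon}=O(n^{g\epsilon})=o(n)$, using $g\epsilon<1$. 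By Markov's inequality $\Pr[Y\ge n/2]\le 2E[Y]/n=o(1)$, so for large $n$ the event $Y<n/2$ has probability more than $1/2$.

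Second, I would control the independence number $\alpha(G)$. Taking $a=\lceil(3/p)\ln n\rceil$, the expected number of independent sets of size $a$ is at most $\binom{n}{a}(1-p)^{\binom{a}{2}}\le\left(n\,e^{-p(a-1)/2}\right)^{a}$, and the choice of $a$ forces the bracketed factor below $1$, so this expectation tends to $0$. Hence $\Pr[\alpha(G)\ge a]=o(1)$, and for large $n$ we get $\alpha(G)<a=O(n^{1-\epsilon}\ln n)$ with probability more than $1/2$. As each of the two bad events has probability below $1/2$, their union misses some outcome, so with positive probability a single graph $G$ satisfies both $Y<n/2$ and $\alpha(G)<a$; fix such a $G$. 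Now delete one vertex from each cycle of length at most $g$, removing fewer than $n/2$ vertices, to obtain $G'$. Then $G'$ has girth greater than $g$, more than $n/2$ vertices, and $\alpha(G')\le\alpha(G)<a$, so $\chi(G')\ge|V(G')|/\alpha(G')> (n/2)/a=\Omega\!\left(n^{\epsilon}/\ln n\right)$, which exceeds $k$ once $n$ is large enough. Choosing such an $n$ finishes the construction; if exact values are wanted, the chromatic number can be lowered to precisely $k$ by deleting vertices one at a time (each deletion drops $\chi$ by at most one and never shortens a cycle), with the girth understood throughout as the lower bound that the application requires.

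The main obstacle is the calibration of $\epsilon$: it must be small enough, namely $\epsilon<1/g$, that the expected number of short cycles is sublinear, while that \emph{same} choice must still leave $\alpha(G)=O(n^{1-\epsilon}\ln n)$ small enough that $|V(G')|/\alpha(G')$ diverges. Checking that one range of $\epsilon$ reconciles both constraints, and that the two first-moment bounds can be made to hold simultaneously with positive probability, is the crux of the argument; the remaining estimates are routine.
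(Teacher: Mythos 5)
The paper gives no proof of this lemma at all --- it is quoted directly from Erd\"os~\cite{Er59} --- and your argument is precisely the classical probabilistic deletion proof of that cited result: your first-moment bound on short cycles, the independence-number bound with $a=\lceil(3/p)\ln n\rceil$, the union-bound/deletion step, and the final estimate $\chi(G')>(n/2)/a=\Omega(n^{\epsilon}/\ln n)$ are all correct. Your closing caveat is also the right reading: the statement must be interpreted as girth \emph{at least} $g$ and chromatic number \emph{at least} $k$ (adjustable to exactly $k$ by deleting vertices one at a time), which is exactly what the paper's application in Theorem~\ref{t-final} requires, since there only the absence of cycles shorter than $s$ and an exact chromatic number (obtained via edge-minimality) are used.
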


This final result extends a result of Golovach et al.~\cite{GPS11} who proved that for all $s\geq 6$, there exists a constant $t^s$ such that $4$-{\sc Colouring} is
 \NP-complete for $(C_5,\ldots,C_{s-1},P_{t^s})$-free graphs.
It is also known that, for all $k\geq 3$ and all $s\geq 4$, {\sc $k$-Colouring}  is \NP-complete for graphs of girth at least~$s$, or equivalently,
$(C_3,\ldots,C_{s-1})$-free 
graphs.  This has been shown by Kr\'al', Kratochv\'{\i}l, Tuza, and
Woeginger~\cite{KKTW01} for the case $k=3$ and by Kami\'nski and Lozin~\cite{KL07} for the case $k\geq 4$.
On the other hand, Golovach et al.~\cite{GPS11} showed that for all $k\geq 1$ and $r\geq 1$,
even {\sc List $k$-Colouring} is p-time solvable for $(C_4,P_r)$-free graphs (also see Theorem~\ref{t-coloringall}), and thus for 
$(C_3,C_4,\ldots,C_{s-1},P_r)$-free graphs for all $s\geq 5$. As such, our new hardness result is best possible and
can be seen as an analog (for $k\geq 4$) of the aforementioned result shown by
Kr\'al' et al.~\cite{KKTW01} and Kami\'nski and Lozin~\cite{KL07}.

\begin{theorem}\label{t-final}
For all $k\geq 4$ and $s\geq 6$, there exists a constant $t^s_k$ such that $k$-{\sc Colouring} is \NP-complete for
$(C_3,C_5,\ldots,C_{s-1},P_{t^s_k})$-free graphs.
\end{theorem}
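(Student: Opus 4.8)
The plan is to mimic the construction behind Theorem~\ref{t-new3} (the case $k=4$, $s=3$), but to replace the Mycielski graph $M'$ by a high-girth graph supplied by Erd\"os' theorem, so that we simultaneously avoid $C_3,C_5,\ldots,C_{s-1}$, and to start the reduction from the precolouring-extension hardness of Theorem~\ref{t-prebipartite} instead of from $M_5$. Since $k$ and $s$ are fixed, every gadget used is a graph of constant size depending only on $s$ and $k$; this is what keeps the reduction polynomial and what ultimately yields the claimed constant $t^s_k$.

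First I would fix a colour-forcing gadget. By Lemma~\ref{l-erdos} there is a finite graph $F$ of girth at least $s$ and chromatic number $k+1$; passing to a $(k+1)$-critical subgraph preserves girth at least $s$, so $F$ is $(C_3,C_4,\ldots,C_{s-1})$-free. Deleting a suitable edge $ww'$ produces a $k$-colourable graph $F'$ in which $\phi(w)=\phi(w')$ in every $k$-colouring and in which $T:=N_{F'}(w)\cup\{w'\}=\{t_1,\ldots,t_k\}$ is an independent set forced to be \emph{rainbow} (to receive all $k$ colours) in every $k$-colouring. This is the exact analogue, for general $k$ and arbitrary girth, of the forced-colour lemma proved for $M'$; one can ensure $|T|=k$, for instance by choosing $w$ to have degree $k$ in a suitable critical graph of high girth. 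Because $F'$ is a fixed finite graph, every induced path and every induced cycle in $F'$ has a constant number of vertices, so the delicate Lemmas~\ref{l-mpropalt1a}--\ref{l-mpropalt2} are here replaced by trivial constant bounds.

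Next I would take the instance $J^k_I$ with precolouring $c_{W^k}$ from Theorem~\ref{t-prebipartite}; recall that $J^k_I$ is $P_{10}$-free and chordal bipartite, hence already $(C_3,C_5,C_6,\ldots)$-free for every $s$. I would then delete the precolouring exactly as in the $J^*_I$ construction of Section~\ref{s-colouring}: add a disjoint copy of $F'$, remove the pendant vertices attached to $B\cup C$, and for each surviving former pendant $v$ precoloured with colour $i$ add the edges $vt_j$ for all $j\neq i$, together with the analogous edges from the $B$- and $C$-type vertices to the appropriate $t_i$. Since $T$ is rainbow, $v$ is forced to receive the colour $\phi(t_i)$, so the resulting graph $J^*_I$ is $k$-colourable if and only if $J^k_I$ has a $k$-colouring extending $c_{W^k}$; combined with Lemma~\ref{l-truthbip2} this gives the reduction from \textsc{Not-All-Equal 3-Sat}, exactly as Lemma~\ref{l-star2} does for $k=4$.

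The hard part will be verifying membership in the class, namely that $J^*_I$ is $(C_3,C_5,\ldots,C_{s-1},P_{t^s_k})$-free. Triangle-freeness follows as before from $T$ being independent and both $F'$ and $J^k_I$ being triangle-free. The genuinely new burden, absent in Theorem~\ref{t-new3} (which only forbade $C_3$), is to exclude induced cycles of \emph{every} length in $\{5,\ldots,s-1\}$. Here $T$ is the interface between the chordal bipartite part $J^k_I$ and the high-girth part $F'$, so any offending induced cycle must either lie inside $F'$ (impossible, since $F'$ has girth at least $s$) or cross $T$; I would show that every cycle crossing $T$ is either a $C_4$ (permitted) or is chorded, the point being that each former pendant is joined to $k-1\geq 3$ vertices of $T$, so the dense token-connections furnish a chord to any short mixed cycle. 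For the path bound I would reuse the case analysis of Lemma~\ref{l-star1}, splitting on the number $\alpha\le k$ of vertices of $T$ lying on an induced path $P$: between consecutive visits to $T$, $P$ stays inside $J^k_I$ (at most $9$ vertices, by $P_{10}$-freeness of Lemma~\ref{l-bipartitej2}) or inside $F'$ (a constant number of vertices), so $|V(P)|$ is bounded by a constant $t^s_k$ depending only on $s$ and $k$. Thus the central difficulty is the cycle-exclusion step, where the high girth of $F'$ and the chord-rich attachment to $J^k_I$ must be reconciled at once; that $C_4$ is the one short cycle we are permitted to keep is precisely what makes this reconciliation possible, and is why the result fails once $C_4$ is also forbidden.
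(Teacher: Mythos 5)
Your proposal is not the paper's construction, and the step you yourself flag as ``the hard part'' --- excluding $C_5,\ldots,C_{s-1}$ --- is where it genuinely fails. The construction you describe (remove the pendants of $B\cup C$, keep the pendants $S$ of $A\cup X$, and join each $v\in S$ with forced colour $i$ directly to $T\setminus\{t_i\}$, with $B$ and $C$ joined to their appropriate $t_i$'s) contains induced six-cycles no matter how large the girth of your gadget is, so for every $s\geq 7$ the graphs you build are not in the class. Concretely: let $a\in A$ and $x\in X$ be joined in $J_I$, let $c\in C$ be the vertex subdividing that edge in $J_I'$, let $v\in S$ be the former pendant of $a$ precoloured $1$ (it exists, e.g.\ for $a=a_{j,1}$ with $L'(a)=\{2,4\}$), and let $v'\in S$ be the former pendant of $x$ precoloured $3$ (it exists since $L'(x)=\{1,2\}$). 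Then $v\sim t_j$ for all $j\neq 1$ and $v'\sim t_j$ for all $j\neq 3$, so both see $t_2$; the vertices of $C$ (list $\{1,2\}$) are joined only to $t_j$ with $j\geq 3$, and the vertices of $A\cup X$ see no vertex of $T$ at all. Hence $v\,a\,c\,x\,v'\,t_2\,v$ is an \emph{induced} $C_6$: your claim that every short cycle crossing $T$ is a $C_4$ or has a chord is false, because the ``dense'' attachments to $T$ provide chords only between $S$ and $T$, never across the $a$--$c$--$x$ portion of the cycle. (The same cycles sit inside the paper's own $J^*_I$ of Theorem~\ref{t-new3}, which is harmless there since only $C_3$ is forbidden; the point is that this construction does not generalize.) There is also a second, independent gap: your gadget's existence is unproven. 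Edge-minimality gives minimum degree at least $k$, but neither Lemma~\ref{l-erdos} nor criticality guarantees a vertex $w$ of degree \emph{exactly} $k$ (there are critical graphs whose minimum degree exceeds the chromatic number minus one), and if $\deg_F(w)>k$ all you can force is that the colours on $N_F(w)$ \emph{cover} all $k$ colours; any chosen $k$-subset $T$ then need not be rainbow in every colouring, a vertex joined to $T\setminus\{t_i\}$ is no longer forced to a unique colour, and the direction ``$k$-colourable $\Rightarrow$ $I$ satisfiable'' collapses. (Rainbow independent sets of high girth do exist, via uniquely $k$-colourable graphs of arbitrary girth, but that is a far deeper theorem than Lemma~\ref{l-erdos} and you never invoke it.)

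The paper avoids both problems with one idea you are missing: it never attaches anything directly to a common rainbow set. It keeps the precoloured graph intact, adds a clique $r_1,\ldots,r_k$, joins $r_i$ to each precoloured pendant $u$ with precolour different from $i$, and then performs an \emph{$F'$-identification} of every one of these new edges, clique edges included: the edge $uv$ is deleted and a fresh copy of $F'$ is glued in with $u,v$ identified with the two vertices $p,q^*$ that receive different colours in every $k$-colouring of $F'$. This needs only the weak two-vertex forcing property, which follows immediately from edge-minimality (so Gap two disappears), and since $\mathrm{dist}_{F'}(p,q^*)\geq s$, every cycle not lying inside $J^4_I$ or inside a single copy of $F'$ must traverse a whole copy and so has length at least $s$; the only remaining short induced cycles are the $C_4$'s of the chordal bipartite graph (so Gap one disappears). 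The induced-path bound then follows by the block decomposition you describe, with at most $k$ visits to $\{r_1,\ldots,r_k\}$. Your choice of starting point is actually sounder than the paper's own wording: you reduce from $J^k_I$ with $c_{W^k}$, which is what is needed to confine the list vertices to $\{1,2,3,4\}$ when $k\geq 5$, whereas the paper writes $J^4_I$; but this does not rescue the construction, whose cycle-exclusion step is irreparable without switching to edge identification.
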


\begin{proof}
Let $k\geq 4$ and $s\geq 6$.
By Lemma~\ref{l-erdos}, there exists an edge-minimal graph~$F$ with chromatic number $k+1$ and girth $s$.
Let $pq$ be an edge in $F$ and let $F-pq$ denote the graph obtained from $F$ after removing $pq$.
Then, by definition, $F-pq$ has at least one $k$-colouring, and moreover, $p$ and $q$ receive the same colour in every $k$-colouring of $F-pq$.
We introduce a new vertex $q^*$ that we make adjacent only to $q$. Call the resulting graph $F'$.
Then $F'$ has at least one $k$-colouring, and moreover, $p$ and $q^*$ receive a different colour in every $k$-colouring of $F'$.
An {\em $F'$-identification} of some edge $uv$ in a graph $G$ is the following operation: delete the edge $uv$ and add a copy of $F'$ between $u$ and $v$ by identifying vertices $u$ and $v$ with $p$ and $q^*$, respectively (we call these two new vertices $u$ and $v$ again).

Now consider the graph $J_I^4$.
We take a complete graph on $k$ new vertices $r_1,\ldots,r_k$.
Recall that we had defined a precolouring~$c_{W^4}$ for the subset $W^4\subseteq V(J_I^4)$.
We add an edge between a vertex $r_i$ and a vertex $u\in W^4$ if and only if $c_{W^4}(u)\neq i$.
Afterward we perform an $F'$-identification of every edge between two vertices $r_i$ and $r_j$ and on every edge between
a vertex $r_i$ and a vertex in~$W^4$. Let $G_I^4$ be the resulting graph.

We observe that $G_I^4$ is not $C_4$-free.
However, because $J^4_I$ is chordal bipartite by Lemma~\ref{l-bipartitej2} and because we performed appropriate $F'$-identifications,
$G_I^4$ is  $(C_3,C_5,\ldots,C_{s-1})$-free.
Let $Q$ be an induced path in $G_I^4$.
Let $h=|V(Q)|\cap \{r_1,\ldots,r_k\}$.
Then we can write $Q=Q_1r_{i_1}Q_2r_{i_2}\cdots Q_hr_{i_h}Q_{h+1}$, where the vertices of each $Q_i$ all belong either to an $F'$-copy or to $J^4_k$.
Because $|F'|$ is a constant that depends only on $k$ and~$s$, and $J^4_I$ is $P_{10}$-free by Lemma~\ref{l-bipartitej2}, we find that
there exists a constant $t^s_k$, only depending on $k$ and $s$, such that
$Q$ has length at 
most~$t^s_k$. Hence, $G_I^4$ is $P_{t^s_k}$-free.

By construction, $G_I^4$ has a $k$-colouring if and only if $J_I^4$ has a 4-colouring that is an extension of $c_{W^4}$.
We are left to apply Lemma~\ref{l-truthbip2} and to recall that {\sc Not-All-Equal 3-SAT} with positive literals is \NP-complete.
\qed
\end{proof}

\noindent
{\bf Remark 1.}
Theorem~\ref{t-final} implies Theorem~\ref{t-coloringall}~(iii).6; we can choose $t_k=t^6_k$ for example.
We claim that a slight modification of the construction used in the proof of Theorem~\ref{t-final} gives us a better upper bound for $t_k$.
Instead of using an edge-minimal graph $F$ with chromatic number $k+1$ and girth $s$, we take the ($C_3$-free) Mycielski graph $M_{k+1}$.
Following the proof of Theorem~\ref{t-final} we pick an edge $pq$ of $M_{k+1}$ and obtain a modified graph~$M_{k+1}'$.
The graph obtained from $J_I^4$ after performing the $M_{k+1}'$-identifications is $C_3$-free.
Let $Q_1r_{i_1}Q_2r_{i_2}\cdots Q_hr_{i_h}Q_{h+1}$ be an induced path in this graph where the vertices of each $Q_i$ all belong either to an $M_{k+1}'$-copy or to $J^4_k$.
Because $|V(M_k)|=3\cdot 2^{k-2}-1$ for all $k\geq 2$ and $J^4_I$ is $P_{10}$-free, we find that, for all $k\geq 5$, $t_k\leq k + (k+1)(3\cdot 2^{k-1}-1)$.

\section{Proof of Theorem~\ref{t-coloringall}} \label{s-summary}

To prove Theorem~\ref{t-coloringall}
we need first to discuss some additional results.
Kobler and Rotics~\cite{KR03} showed that for any constants $p$ and  $k$, {\sc List $k$-Colouring} is p-time solvable on any class of graphs that have clique-width at most $p$, assuming that a $p$-expression is given.
Oum~\cite{Oum08} showed that a  $(8^{p}-1)$-expression for any $n$-vertex graph with clique-width at most $p$ can be found in $O(n^3)$ time.
Combining these two results leads to the following theorem.

\begin{theorem}\label{t-cw}
Let ${\cal G}$ be a graph class of bounded clique-width.
For all  $k\geq 1$, {\sc List $k$-Colouring}  can be solved in p-time on ${\cal G}$.
\end{theorem}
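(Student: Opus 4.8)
The plan is to combine the two cited results in the obvious way, the only subtlety being the mismatch between the clique-width bound and the width of the expression that can be computed efficiently. Suppose $\cal G$ has clique-width at most $p$, where $p$ is a constant since the class has bounded clique-width. Fix $k\geq 1$. Given an $n$-vertex graph $G\in{\cal G}$, I would first invoke Oum's algorithm~\cite{Oum08} to compute, in $O(n^3)$ time, a $(8^p-1)$-expression for $G$. The key observation here is that although this expression need not be a $p$-expression, its width $8^p-1$ is again a constant, because $p$ is fixed for the class $\cal G$.

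Next I would feed this expression into the algorithm of Kobler and Rotics~\cite{KR03}. Their result states that for any constants $p'$ and $k$, {\sc List $k$-Colouring} is polynomial-time solvable on graphs of clique-width at most $p'$ provided a $p'$-expression is supplied. Applying this with $p'=8^p-1$ (a constant) and the given $k$, and using the $(8^p-1)$-expression just computed, solves {\sc List $k$-Colouring} for $G$ in polynomial time. Adding the $O(n^3)$ preprocessing cost to the polynomial running time of the Kobler--Rotics algorithm keeps the total running time polynomial in $n$, which establishes the theorem.

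There is no real obstacle in this argument; the one point that must be stated carefully is that replacing the clique-width bound $p$ by the larger width $8^p-1$ of the computable expression does not break the applicability of~\cite{KR03}, since that result only requires the width to be a constant and imposes no relation between it and the true clique-width. I would therefore emphasise precisely this step, namely that $8^p-1$ remains constant as $p$ is constant, so that both cited theorems apply simultaneously and compose into a single polynomial-time procedure.
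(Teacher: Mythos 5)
Your proposal is correct and follows essentially the same route as the paper, which likewise combines the Kobler--Rotics algorithm with Oum's $O(n^3)$ computation of an $(8^p-1)$-expression; the paper leaves the composition implicit, while you spell out the (correct) point that the inflated width $8^p-1$ is still a constant and hence harmless.
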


We also need the following result due to Gravier, Ho\'ang and Maffray~\cite{GHM03} who slightly improved upon a bound of
Gy\'arf\'as~\cite{Gy87} who showed that every $(K_s,P_t)$-free graph can be coloured with at most $(t-1)^{s-2}$ colours.

\begin{theorem}[\cite{GHM03}]\label{t-bound}
Let $s,t\geq 1$ be two integers.
Then every $(K_s,P_t)$-free graph can be coloured with at most $(t-2)^{s-2}$ colours.
\end{theorem}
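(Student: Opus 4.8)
The plan is to prove the statement by induction on $s$, establishing the sharper recurrence $f(s,t)\le (t-2)\,f(s-1,t)$, where $f(s,t)$ denotes the maximum chromatic number of a $(K_s,P_t)$-free graph. The base case $s=2$ is immediate: a $K_2$-free graph has no edges, so $f(2,t)=1=(t-2)^{0}$. Iterating the recurrence then yields $f(s,t)\le (t-2)^{s-2}$, exactly the claimed bound. Throughout I would restrict to connected graphs, since the chromatic number of a graph is the maximum over its components and each component is again $(K_s,P_t)$-free.

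For the inductive step, let $G$ be a connected $(K_s,P_t)$-free graph. The first key fact is local: for every vertex $w$, the subgraph $G[N(w)]$ induced by the open neighbourhood of $w$ is $(K_{s-1},P_t)$-free, because $w$ together with any clique in $N(w)$ forms a larger clique. Hence, by the inductive hypothesis, $\chi(G[N(w)])\le f(s-1,t)\le (t-2)^{s-3}$. The plan is then to cover $V(G)$ by a bounded number of \emph{rounds} of such neighbourhoods. Formally, set $G_1=G$ and, for each round $i$, choose a root $r_C$ in every connected component $C$ of $G_i$; delete the closed neighbourhoods $N[r_C]$ to obtain $G_{i+1}$, and repeat until the graph is empty. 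Each deleted vertex $v$ either is a root of its round or lies in the open neighbourhood of a round-$i$ root, in which case it inherits one of at most $f(s-1,t)$ colours from a proper colouring of that $(K_{s-1},P_t)$-free neighbourhood. Colouring each vertex by the pair (round index, local colour), with a reserved local colour for roots, one checks properness directly: an edge whose endpoints are removed in different rounds is separated by the first coordinate, while an edge removed within a single round joins two vertices of one component, hence lying in the closed neighbourhood of a single root, where the reserved and local colours separate its endpoints.

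The crux --- and the step I expect to be the main obstacle --- is bounding the number of rounds by (essentially) the length of a longest induced path, so that the multiplicative factor is $t-2$ rather than something unbounded. This is the \emph{Gy\'arf\'as path argument}: a vertex surviving into round $i+1$ lies at distance at least two from the round-$i$ root in its component, and by choosing the roots carefully (for instance as deepest vertices, or along a partially built path) one can stitch successive roots into a single induced path that gains a vertex each round. Since $G$ is $P_t$-free, this path has at most $t-1$ vertices, which caps the number of rounds.

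Turning this into the exact bound $(t-2)^{s-2}$ requires the two refinements of Gravier, Ho\'ang and Maffray over the original argument of Gy\'arf\'as. First, one must eliminate the additive slack created by colouring roots separately, either by folding each root into the neighbourhood colouring of an adjacent earlier round, or by observing that the final round consists only of pairwise non-adjacent roots. Second, one must shave the round count from $t-1$ to $t-2$ through a sharper analysis of the endpoints of the certifying induced path. These two savings are precisely what separate the bound $(t-1)^{s-2}$ of Gy\'arf\'as from the improved $(t-2)^{s-2}$ stated in the theorem, and making the bookkeeping tight --- in particular verifying that no layer forces the clique number back up and that the stitched path genuinely stays induced --- is where the real work lies.
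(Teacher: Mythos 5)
You should first note that the paper does not prove Theorem~\ref{t-bound} at all: it quotes the result from Gravier, Ho\`ang and Maffray~\cite{GHM03}, so your attempt can only be measured against the argument in that reference, which is indeed the Gy\'arf\'as-type argument your outline gestures at. Your frame is right --- induction on $s$ via $f(s,t)\le (t-2)f(s-1,t)$, the observation that $G[N(w)]$ is $(K_{s-1},P_t)$-free, reduction to connected graphs, and the properness check for colours of the form (round, local colour) are all fine. The genuine gap is exactly the step you flag as ``the main obstacle'', and the one concrete mechanism you offer for it fails. In your scheme the round-$i$ roots are chosen \emph{inside} the components of $G_i$, arbitrarily or as ``deepest'' vertices. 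With such choices the number of rounds is not bounded by any function of $t$: consider the $P_6$-free spider with centre $c$ and legs $c\,a_i\,b_i$ ($i=1,\ldots,n$); if each round's root is a leaf $b_i$ (and leaves are exactly the deepest vertices under any natural reading), then each round deletes only $\{a_i,b_i\}$, the rest stays connected, and the process takes about $n$ rounds. Moreover, when roots lie inside their components, consecutive roots are at distance at least $2$, so any ``stitched'' path must pass through intermediate vertices; it then neither gains one vertex per round nor stays induced (nothing prevents a later root from being adjacent to an earlier intermediate vertex), so no bound of the form (number of rounds) $\le t-2$ can be extracted this way.

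The missing idea is the Gy\'arf\'as selection rule, which is incompatible with your ``root inside each component'' set-up: the root used at step $i+1$ to colour the active component $C_i$ is \emph{not} a vertex of $C_i$, but a vertex $r_{i+1}\in N(r_i)$ lying in the \emph{previous} active component $C_{i-1}$ and having a neighbour in $C_i$; such a vertex exists because $C_{i-1}$ is connected and $C_i$ is a component of $C_{i-1}\setminus N(r_i)$. This rule forces $r_1r_2\cdots$ to be an induced path: $r_{i+1}$ is adjacent to $r_i$ by choice, and non-adjacent to each earlier $r_j$ because $r_{i+1}\in C_{i-1}\subseteq C_j$ and $C_j\cap N[r_j]=\emptyset$. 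Since any neighbour $x\in C_{k-1}$ of $r_k$ extends this to the induced path $r_1\cdots r_k x$, having $k$ colouring rounds forces an induced $P_{k+1}$, whence $k\le t-2$; colouring $N(r_i)\cap C_{i-1}$ with palette $i$ and absorbing $r_1$ into palette $2$ then yields $(t-2)f(s-1,t)$ with no additive slack. (All of this needs $t\ge 4$: as literally stated with $s,t\ge 1$, both the theorem and your recurrence fail for $t\le 3$, e.g.\ $K_3$ is $(K_4,P_3)$-free but not $1$-colourable.) Finally, the two ``refinements'' you attribute to~\cite{GHM03} --- handling the roots' colours and improving $t-1$ to $t-2$ --- are precisely the content of the theorem beyond Gy\'arf\'as's bound $(t-1)^{s-2}$, and in your write-up they are announced rather than proved. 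So what you have is a correct plan whose decisive steps are left open, and whose one fully specified step does not work as specified.
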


We are now ready to prove Theorem~\ref{t-coloringall} by considering each case.
For each we either refer back to an earlier result, or give a reference; the results quoted can clearly be seen to imply the statements of the theorem.

\medskip
\noindent
{\it Proof of Theorem~\ref{t-coloringall}.}
We first consider the intractable cases of {\sc List $k$-Colouring} and note that (i).1 follows from Theorem~\ref{t-hard4}, and Theorem~\ref{t-main} implies that {\sc List $4$-Colouring} is \NP-complete for the class of $(C_5,C_6,P_6)$-free graphs which proves (i).2.
We now consider the tractable cases.
Erd\"os,  Rubin and Taylor~\cite{ERT79} and Vizing~\cite{Vi79} observed that $2$-{\sc List Colouring} is p-time solvable on general
graphs implying (i).3.
Broersma et al.~\cite{BFGP13} showed that {\sc List $3$-Colouring} is p-time solvable for $P_6$-free graphs from which we can infer (i).4 and (i).6.
Golovach et al.~\cite{GPS11} proved that for all $k,r,s,t\geq 1$,  {\sc List $k$-Colouring}  can be solved in linear time for $(K_{r,s},P_t)$-free graphs.
By taking $r=s=2$, we obtain (i).5 and (i).8.
The class of $(C_3,P_6)$-free graphs was shown to have bounded clique-width by Brandst\"adt, Klembt and Mahfud~\cite{BKM06};
using Theorem~\ref{t-cw} we see that {\sc List $k$-Colouring} is p-time solvable on $(C_3,P_6)$-free graphs for all $k\geq 1$ demonstrating (i).7.
Ho\`ang, Kami\'nski, Lozin, Sawada, and Shu~\cite{HKLSS10} proved that for all $k\geq 1$, {\sc List $k$-Colouring} is p-time solvable on $P_5$-free graphs proving (i).9.

We now consider {\sc $k$-Precolouring Extension}.
The tractable cases all follow from the results on {\sc List $k$-Colouring} just discussed.  So we are left to consider the \NP-complete cases.
Theorem~\ref{t-prebipartite} implies (ii).1 and (ii).6.
Theorems~\ref{t-new1} and~\ref{t-new2} imply (ii).4 and (ii).7, respectively.
And (ii).2, (ii).3 and (ii).5 follow immediately from corresponding results for {\sc $k$-Colouring} proved by Hell and Huang~\cite{HH13}.

Finally, we consider $k$-{\sc Colouring}; first the
\NP-complete cases. Theorem~\ref{t-new3} gives us (iii).1.
Recall that Theorem~\ref{t-final} implies Theorem~\ref{t-coloringall}~(iii).6 (see Remark~1 in Section~\ref{s-colouring}).
Hell and Huang~\cite{HH13} proved all the other \NP-completeness subcases.
Chudnovsky, Maceli and Zhong~\cite{CMZ14a,CMZ14b} announced that 3-{\sc Colouring} is p-time solvable on $P_7$-free graphs, which gives
us (iii).10 and (iii).12.
Chudnovsky, Maceli, Stacho and Zhong~\cite{CMSZ14} announced that 4-{\sc Colouring} is p-time solvable for $(C_5,P_6)$-free graphs, which
gives us (iii).15.
Theorem~\ref{t-bound} gives us (iii).17.  All other tractable cases follow from the corresponding tractable cases for {\sc List $k$-Colouring}.
\qed

\section{Open Problems}\label{s-open}

From Theorem~\ref{t-coloringall}, we see that the following cases are open in the classification of the complexity of graph colouring problems for $(C_s,P_t)$-free graphs (recall
that $t_k$ is a constant only depending on $k$).
\begin{itemize}
\item [(i)] For {\sc List $k$-Colouring} the following cases are open:
\begin{itemize}
\item [$\bullet$] $k=3$, $s=3$ and $t \geq 7$
\item [$\bullet$] $k=3$, $s\geq 5$ and $t\geq 7$
\item[$\bullet$] $k\geq 4$, $s=3$ and $t=7$.\\[-10pt]
\end{itemize}
\item[(ii)] For {\sc $k$-Precolouring Extension} the following cases are open:
\begin{itemize}
\item [$\bullet$] $k=3$, $s=3$ and $t \geq 7$
\item [$\bullet$] $k=3$, $s\geq 5$ and $t\geq 7$
\item[$\bullet$] $k=4$, $s=3$ and $7\leq t\leq 9$
\item[$\bullet$] $k=4$, $s\geq 5$ and $t=6$
\item[$\bullet$] $k=4$, $s=7$ and $t=7$
\item [$\bullet$] $k\geq 5$, $s=3$ and $7\leq t\leq 9$.\\[-10pt]
\end{itemize}
\item[(iii)] For {\sc $k$-Colouring} the following cases are open:
\begin{itemize}
\item [$\bullet$] $k=3$, $s=3$ and $t \geq 8$
\item [$\bullet$] $k=3$, $s\geq 5$ and $t\geq 8$
\item[$\bullet$] $k=4$, $s=3$ and $7\leq t\leq 21$
\item[$\bullet$] $k=4$, $s\geq 6$ and $t=6$
\item [$\bullet$] $k=4$, $s=7$ and $7\leq  t\leq 8$
\item [$\bullet$] $k\geq 5$, $s=3$ and $k+3\leq t\leq t_k-1$
\item [$\bullet$]  $k\geq 5$, $s=5$ and $t=6$.
\item[]
\end{itemize}
\end{itemize}
Besides solving these missing cases (and the missing cases from Table~\ref{t-table1}) we pose the following three problems specifically.
First, does there exist a graph $H$ and an integer $k\geq 3$ such that {\sc List $k$-Colouring} is \NP-complete and $k$-{\sc Colouring} is p-time solvable for $H$-free graphs? Theorem~\ref{t-coloringall} shows that if we forbid two induced subgraphs then the complexity of these two problems {\it can} be different: take $k=4$,
$H_1=C_5$ and $H_2=P_6$.
Second,  is {\sc List $4$-Colouring} \NP-complete for $P_7$-free bipartite graphs?
This is the only missing case of {\sc List $4$-Colouring} for $P_t$-free bipartite graphs due to Theorems~\ref{t-coloringall} and~\ref{t-hard4}.
Third, what is the computational complexity of {\sc List 3-Colouring} and {\sc 3-Precolouring Extension} for chordal bipartite graphs?

\end{document}